\documentclass[letterpaper,twocolumn,10pt]{article}
\usepackage{usenix}

\usepackage{textcomp}
\usepackage{setspace}
\usepackage{latexsym,fancyhdr,url}
\usepackage{enumerate}
\usepackage[linesnumbered,ruled]{algorithm2e}
\usepackage{graphics}
\usepackage{xparse} %
\usepackage{xspace}
\usepackage{multirow}
\usepackage{csvsimple}
\usepackage{balance}

\usepackage{amsmath}
\usepackage{booktabs}
\usepackage{enumitem}
\usepackage{colortbl}
\usepackage{amsmath,amssymb,amsfonts}
\usepackage{fancyhdr}
\usepackage{makecell}
\usepackage{xcolor, eucal}
\usepackage{array}
\usepackage{subfigure}
\usepackage{graphicx}
\usepackage{caption}
\usepackage{tabularx}
\usepackage{dsfont}
\definecolor{gray0}{gray}{0.9}
\definecolor{mine}{RGB}{205, 232, 248} 
\usepackage{algpseudocode}
\usepackage{colortbl}

\usepackage{amsmath,amssymb,amsfonts}
\usepackage{graphicx}
\usepackage{textcomp}
\usepackage{booktabs}
\usepackage{hyperref}
\usepackage{multirow}
\usepackage{makecell}
\usepackage{url}

\usepackage{amsmath,amsthm}
\usepackage{epsfig,endnotes}
\usepackage{xspace}
\usepackage{enumitem}
\usepackage{cleveref}
\usepackage{float}

\definecolor{gray0}{gray}{0.9}
\definecolor{mine}{RGB}{205, 232, 248} 
\def \toolname{DP-SAPF}

\newcommand\gc[1]{{\color{black}  #1}}

\newtheorem{definition}{\bf Definition}[]
\newtheorem{theorem}{\bf Theorem}

\usepackage{filecontents}

\begin{document}

\date{}

\title{\Large \bf DP-SAPF: Saliency-Aware Parameter Fine-tuning of Public Models for Differentially Private Image Synthesis}

\author{
{\rm Chen Gong}\\
University of Virginia
\and
{\rm Kecen Li}\\
NUS
\and
{\rm Zinan Lin}\\
Microsoft Research
\and
{\rm Tianhao Wang}\\
University of Virginia
} %

\vspace{-5mm}
\maketitle

\begin{abstract}
Differentially private (DP) image synthesis 
generates images that preserve the statistical characteristics of a sensitive dataset, enabling sensitive data analysis and usage while providing rigorous guarantees of privacy leakage. Existing methods fine-tune public models using DP Stochastic Gradient Descent (DP-SGD) on sensitive images to generate synthetic images. But full fine-tuning public models on sensitive images is computationally expensive, because current public models typically contain a large number of parameters. Recent work proposes heuristically using Low-Rank Adaptation (LoRA) on all attention-layer parameters of public models to reduce the number of trainable parameters. However, we argue that exhaustive LoRA coverage across all attention-layer parameters is suboptimal in a DP setting, as it leads to noise accumulation and collapse during private training. 

To address this issue, we propose \toolname, which uses a saliency-aware strategy to identify specific target parameters for LoRA training under DP. \toolname{} is inspired by the fact that larger gradients signify higher saliency, indicating that these parameters are most critical for the DP learning. Specifically, we feed the sensitive images into public models, compute gradients, and add noise to the gradients to satisfy DP. Then, \toolname{} identifies the most salient parameters, those exhibiting high gradient magnitudes on sensitive images, for DP fine-tuning. Experiments on four sensitive image datasets show that \toolname{} improves the utility and fidelity of synthetic images while requiring fewer computational resources than fine-tuning methods without parameter selection. 

\end{abstract}

\section{Introduction}
\label{sec:intro}

Differentially private (DP) image synthesis generates artificial images that maintain the statistical characteristics of sensitive datasets under DP~\cite{gong2025dpimagebench}. This allows for dataset sharing while providing rigorous guarantees to reduce privacy leakage of sensitive images. The assistance of public resources has catalyzed the rapid development of DP image synthesis~\cite{dplora,dpldm,gong2025dpimagebench,pe3,dpsda,gong2025privorl}.
One direction is pretraining synthesizers on public datasets and then finetuning models on sensitive images with Differentially Private Stochastic Gradient Descent (DP-SGD)~\cite{dp-diffusion,dpldm,li2023privimage}.
However, prior work~\cite{gong2025dpimagebench} shows that pretraining synthesizers on public datasets is time-consuming and often fails to synthesize high-resolution images with high quality. For example, DPImageBench~\cite{gong2025dpimagebench} reports that PrivImage (a predominant DP image synthesis method using public datasets)~\cite{li2023privimage} achieves only 61.2\% downstream classification accuracy on {\tt CelebA}~\cite{celeba} datasets with 
$64 \times 64$ resolution, and the training of the synthesizer is time-consuming.

To tackle the challenges, the other direction is to directly use \emph{public models} for DP fine-tuning~\cite{dplora}. %
The public models like Stable Diffusion~\cite{labelembedding} are pretrained on massive datasets using vast computational resources, allowing them to capture robust generative priors, represent complex visual distributions, and generate high-resolution images. These public models are readily accessible via open-source platforms like HuggingFace. %
However, the parameter-heavy nature of public models makes fine-tuning on sensitive data computationally expensive. For example, Stable Diffusion-v1-5\footnote{\url{https://huggingface.co/stable-diffusion-v1-5} \label{sdv15}} (a public model) has about 1 billion parameters~\cite{labelembedding}.

To address this issue, existing DP image synthesis methods using public models~\cite{dplora,dpldm,gong2025dpimagebench,pe3,dpsda,liu2025privcode} contribute two paradigms based on their reliance on fine-tuning. The first category comprises fine-tuning-free methods~\cite{dpsda,xiedifferentially,wang2025synthesize,wang2025struct,pe3}, like PE~\cite{dpsda}. This approach iteratively guides public models to generate synthetic images that align with sensitive data by selecting and refining the most similar synthetic images. However, these fine-tuning-free methods suffer from quality degradation when the target synthetic distribution significantly diverges from the sensitive images~\cite{gong2025dpimagebench,pe3}. A second category involves fine-tuning-based methods~\cite{dplora,dpldm,liu2025privcode,zou2026pe}, like DP-LoRA~\cite{dplora}. These methods mitigate high computational costs by using LoRA~\cite{lora} under DP to reduce the number of trainable parameters in public models. Existing DP synthesis methods neither perform LoRA on all parameters of attention layers
fine-tuning~\cite{dplora,labelembedding} nor following established practices in non-DP image synthesis~\cite{peft,nondpfinetuning}.
They heuristically optimize only the attention layers~\cite{dpldm,dplora}.

\gc{However, this paper claims that exhaustive fine-tuning coverage across all attention layers is suboptimal in a DP setting, as it leads to excessive noise accumulation and training collapses.} For example, as shown in Section~\ref{subsec:mot}, the attention layers within the middle block of the Stable Diffusion exhibit high empirical sensitivity. These layers act as refining encoder-extracted features and ensuring semantic alignment with text, where even parameter updates under DP may significantly steer the synthetic performance~\cite{labelembedding}. \gc{Thus, we investigate: \textit{how to relieve the training collapse when using public models in DP fine-tuning to optimize the synthetic performance?} This challenge has not been explicitly addressed in prior DP image synthesis methods~\cite{gong2025dpimagebench,dplora,dpldm,pe3,li2023privimage}.}

To answer this question, this paper proposes Saliency-Aware Parameter Fine-tuning of public models for DP image synthesis (\toolname), which 
addresses two fundamental challenges in private fine-tuning.

\begin{itemize}[leftmargin=*]
\item \textit{Regarding the selection mechanism}. 
When finetuning public models on sensitive images, some parameters already equipped with general features exhibit very small gradients, indicating that little adjustment is needed, whereas parameters suffering from domain mismatch show increased gradients~\cite{wang2020picking}. In such small‑gradient regions, the injected DP noise can easily dominate the true gradient and derail training. 
\toolname{} uses these gradients as an analytical proxy. 
By selecting parameters with large gradients (with DP), it can bypass fragile layers and maintain training stability.

\item \textit{Concerning the structural granularity of selection}. Moreover, \toolname{} transitions from coarse layer-wise tuning to a matrix-wise selection strategy. This design choice exploits the functional heterogeneity within attention modules, where different weight matrices (even within a layer, there are different parts, modeled by matrices) 
exert disproportionate influence on fine-tuning performance~\cite{attentionlayerimproces}. Thus, matrix-level selection enables a more fine-grained balancing between effective fine-tuning and the perturbations introduced by DP noise.
\end{itemize}

\Cref{sec:ass} presents more motivation and technical details of \toolname{}. \Cref{fig:sapf} presents the framework of \toolname.

We conduct extensive experiments across four sensitive datasets and four public models, presenting that \toolname{} improves downstream classification accuracy and reduces the FID score, relative to baselines. For instance, under $\epsilon=10.0$ and using `Stable-Diffusion-v1-5\textsuperscript{\ref{sdv15}}' as the public model, the FID score is reduced by 83.1\%, and the accuracy increases by 31.5\% compared to DP-LoRA averaged across four sensitive image datasets. We conduct analysis for various privacy budget shows that the synthetic performance is insensitive to the selection ratio within the range of [20\%, 50\%]. Thus, tuning the selection ratio in practice is straightforward. We observe that the parameter matrices closer to the model’s input and output layers exhibit larger gradients, suggesting that these matrices play more critical roles in private training.
\begin{itemize}[leftmargin=*]
    \item This paper proposes \toolname{}, the first framework to introduce fine-grained dynamically parameter selection into DP fine-tuning. We present the observation that empirical parameter selection from prior works can render DP fine-tuning unstable. \toolname{} paves the way for future research in DP image synthesis.
    \item \toolname{} introduces a saliency-aware parameter selection method for DP fine-tuning. By dynamically identifying the most critical parameters based on the gradients derived from different sensitive image datasets under DP, \toolname{} reduces the number of parameters required for fine-tuning, facilitating efficient and stable DP image synthesis.
    \item Extensive experiments across four sensitive datasets present that \toolname{} improves utility and fidelity of synthetic images, compared to fine-tuning methods without parameter selection, while consuming fewer computational resources.
\end{itemize}

\section{Background}
\label{sec:setup}

This section introduces the background, including the DP notion, diffusion models, current DP image synthesis methods that take advantage of public models, and the LoRA method.

\subsection{Differential Privacy}
\label{sub:dp}
The Differential Privacy (DP)~\cite{dp} provides a robust mathematical framework for quantifying the privacy risk associated with data leakage. The concept of DP is defined as follows.

\begin{definition}[\textit{Differential Privacy}~\cite{dp}]
     A randomized algorithm $\mathcal{Q}$ achieves $(\varepsilon, \delta)$-DP if, for any two neighboring datasets $D$ and $D'$ (one can be derived from the other by \textit{adding or removing} a single record), and for all possible output sets $\mathcal{O} \subseteq Range(\mathcal{Q})$, the following inequality is satisfied,
\begin{equation}\label{eq:dp}
    \Pr[\mathcal{Q}(D) \in \mathcal{O}] \leq e^\varepsilon \Pr[\mathcal{Q}(D') \in \mathcal{O}] + \delta.
\end{equation}
\end{definition}
\noindent Here, $\varepsilon > 0$ denotes the privacy budget, where lower values correspond to more stringent privacy protections. The parameter $\delta \geq 0$ means the probability of a privacy breach beyond the $\varepsilon$ guarantee. A fundamental property of DP is its immunity to post-processing~\cite{dp}: for any data-independent function $\mathcal{F}$, the composition $\mathcal{F} \circ \mathcal{Q}$ maintains the same $(\epsilon, \delta)$-DP guarantee. This ensures that downstream analysis performed on synthetic images satisfying DP mechanisms, incurs no additional privacy degradation.

\vspace{1mm}
\noindent \textbf{DP-SGD.} We adopt DP-SGD~\cite{dpsgd} to satisfy privacy constraints while fine-tuning our DP image synthesizer. This mechanism sanitizes the learning process by first bounding the $\ell_2$ norm of individual gradients to $C$ and then perturbing the aggregated result with Gaussian noise $\mathcal{N}(0, \sigma_d^2 \mathbb{I})$. At each step, we utilize a Poisson sampling strategy (sampling rate $q$) to extract a sub-batch of images $D_s^{\text{sub}}=\{x_i\}_{i=1}^{B}$ from the sensitive dataset $D_s$. The realized batch size $B$ is a random variable following a Binomial distribution, whereas $B^* = qN$ denotes the expected batch size. Following the standard practice in DP-SGD, $B^*$ is the fixed normalization constant for the gradient sum to ensure the unbiasedness of the estimator and consistency in privacy accounting~\cite{dpsgd}. The parameters $\theta$ of the synthesizer are updated via the following noisy gradient, 
\begin{equation}
\label{eq:dpsgd}
    \lambda \left( \frac{1}{{B^*}} \sum_{i = 1}^{{B}} \text{Clip}\left(\nabla {\mathcal{L}}(\theta, x_i), C\right) + \frac{C}{{B^*}} \mathcal{N}(0, \sigma_d^2 \mathbb{I}) \right),
\end{equation}
\noindent where $\mathcal{L}$ is the objective function of DP image synthesizer, and $\lambda$ is the learning rate and $\sigma_d^2$ is the variance of Gaussian noise. The $\text{Clip}(\cdot, C)$ denotes the norm-clipping operation that scales the gradient $\nabla \mathcal{L}$ if its $\ell_2$ norm exceeds the threshold $C$. Following previous works~\cite{gong2025dpimagebench,dpldm,dplora,dp-diffusion}, we use R\'{e}nyi DP (RDP)~\cite{sgm} to track the privacy loss, as detailed in~\Cref{app:supp_dp}.

\begin{figure}[!t]
    \centering
    \setlength{\abovecaptionskip}{0pt}
    \includegraphics[width=0.97\linewidth]{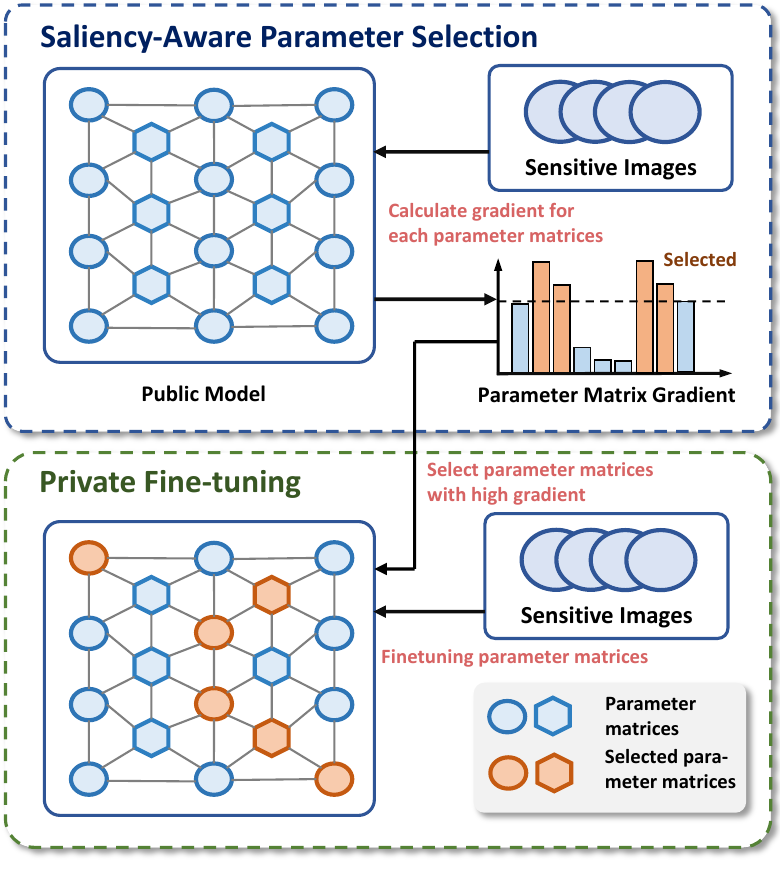}
    \vspace{1mm}
    \caption{The framework of \toolname. In the first stage, \toolname{} computes the gradient of sensitive images with respect to each parameter matrix. Parameter matrices exhibiting large gradients are identified as salient and selected for fine‑tuning. In the second stage, these selected parameter matrices are updated on the sensitive dataset using DP‑SGD with LoRA, 
    while all remaining parameter matrices are kept frozen.%
    }
    \label{fig:sapf}
\end{figure} 

\vspace{1mm}
\noindent \textbf{DP Image Synthesis.}
By creating synthetic datasets that mimic real-world distributions while satisfying DP, DP image synthesis enables organizations to share high-quality synthetic data without compromising individual privacy.

This work focuses on DP image synthesis methods using public models. \Cref{tab:SumOfMethods} presents current \textit{state-of-the-art} popular methods. Previous 
work neither finetunes all parameters of public models~\cite{dp-diffusion} nor follows the fine‑tuning practices commonly used in non‑DP image synthesis~\cite{peft,nondpfinetuning}; instead, it fine‑tunes only the attention layers~\cite{dpldm,dplora}. Unlike these fixed-layer strategies, our \toolname{} introduces a saliency-aware selection mechanism that dynamically identifies the critical
parameters based on sensitive images under DP.

\begin{table}[!t]
\renewcommand{\arraystretch}{1.1}
\setlength{\tabcolsep}{10pt}
    \centering
    \caption{Comparison of existing DP image synthesis methods leveraging public models. The first two methods listed above the horizontal line are fine-tuning-free, whereas the remaining methods use fine-tuning.}
    \label{tab:SumOfMethods}
    \resizebox{0.48\textwidth}{!}{
    \begin{tabular}{l|ccc}
    \toprule
    Method & Training Methods & Parameter Selection & Year\\
    \hline
    PE~\cite{dpsda} & Fine-Tuning Free & - & 2024\\
    Aug-PE~\cite{xiedifferentially} & Fine-Tuning Free & - & 2024 \\
    \hline
    DP-Finetune~\cite{dp-diffusion} & DP-SGD & Full Parameter & 2023 \\
    DP-LDM~\cite{dpldm} & DP-SGD & Attention & 2024 \\
    DP-LoRA~\cite{dplora} & DP-SGD + LoRA & Attention & 2025 \\
    \cellcolor{gray!20}{\toolname{} (Ours)} & \cellcolor{gray!20}{DP-SGD + LoRA} & \cellcolor{gray!20}{Saliency-Aware Selection} & \cellcolor{gray!20}{2026} \\
    \bottomrule
\end{tabular}}
\end{table}

\subsection{Diffusion Model}
\label{subsec:dm}

Recent advancements in diffusion models have significantly enhanced image generation capabilities, leading current DP image synthesis methods to primarily adopt diffusion models as the synthesizers~\cite{dp-diffusion,dp-feta,li2023privimage,dpsda,pe3}. 

\vspace{1mm}
\noindent \textbf{Traditional Diffusion Models.} Diffusion models~\cite{ddpm} consist of two complementary stages:
\begin{itemize}[leftmargin=*]
    \item \emph{Forward diffusion}, which incrementally perturbs a clean image $x_0$ by adding Gaussian noise over $T$ steps, producing a sequence $\{x_i\}_{i=1}^T$ that transfers to pure noise.
    \item \emph{Reverse diffusion}, which iteratively removes noise to recover a clean image from random initialization.
\end{itemize}
In the forward processes, the transition between consecutive noisy images, denoted as $p(x_t | x_{t-1})$, follows a multi-dimensional Gaussian,
\begin{equation}
    p(x_t | x_{t-1}) = \mathcal{N}\left( x_t; \sqrt{1 - \beta_t} x_{t-1}, \beta_t \mathbb{I} \right),
    \label{eq:variant}
\end{equation}
where $\beta_t$ is a hyperparameter that controls the noise variance at step $t$. Defining $\bar{\alpha}_t = \prod_{s=1}^t (1 - \beta_s)$, the marginal distribution of $x_t$ given $x_0$ is, $p(x_t | x_0) = \mathcal{N}\left( x_t; \sqrt{\bar{\alpha}_t} x_0, (1 - \bar{\alpha}_t) \mathbb{I} \right)$. This allows direct sampling of $x_t$ from $x_0$ via, ${x_t} = \sqrt {{{\bar \alpha }_t}} {x_0} + e_t\sqrt {1 - {{\bar \alpha }_t}} ,e_t\sim \mathcal{N}\left( {0,\mathbb{I}} \right).$
The objective of diffusion models learn a denoising network $e_\theta(x_t, t)$ that predicts the noise $e_t$ added at each step~\cite{ddpm},
\begin{equation} \label{eq:L_DM} \mathcal{L} = \mathbb{E}_{x_0 \sim D,\, t \sim U\{1, T\},\, e_t \sim \mathcal{N}(0, \mathbb{I})} \left\| e_t - e_\theta(x_t, t) \right\|_2^2.
\end{equation}
The $D$ is the dataset of images, and $U\{1,T\}$ means the uniform distribution over time steps. Once trained, $e_\theta$ enables image synthesis by progressively denoising Gaussian noise.

\begin{figure}[!t]
    \centering
    \setlength{\abovecaptionskip}{0pt}
    \includegraphics[width=1.0\linewidth]{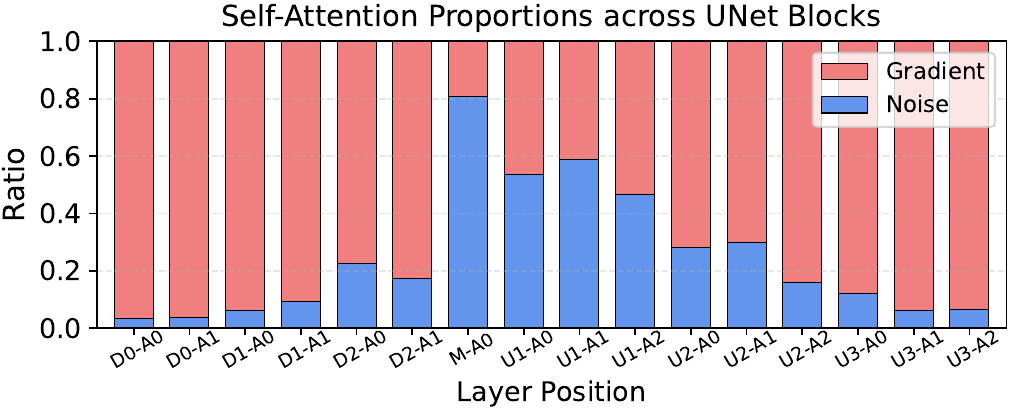}
    \caption{\gc{Each cell shows the ratio of noise scale and gradient norm for self-attention on {\tt CIFAR-10}. The public model is `Stable-Diffusion-v1-5'. `D$i$-A$j$' = Down block $i$, attention layer $j$;  `U$i$-A$j$' = Up block $i$, attention layer $j$.  `M-A0' = Middle block, attention layer 0.}%
    }
    \label{fig:sapf_ratio}
\end{figure} 

\noindent \textbf{Latent Diffusion Models.} Latent diffusion models~\cite{labelembedding} are a variant of diffusion models that operate in a latent space instead of in pixel space. This approach uses a pretrained autoencoder consisting of an encoder $\mathcal{E}$ and a decoder $\mathcal{R}$. Specifically, the encoder $\mathcal{E}$ maps a clean image $x_0$ into a lower-dimensional latent representation $z_0 = \mathcal{E}(x_0)$. The forward diffusion process, consistent with the logic in~\Cref{eq:variant}, is then applied to $z_0$ to produce a sequence of latent variables $\{z_t\}_{i=1}^T$. The latent variable at step $t$ can be expressed as, $z_t = \sqrt{\bar{\alpha}_t} z_0 + e_t \sqrt{1 - \bar{\alpha}_t}, \; e_t \sim \mathcal{N}(0, \mathbb{I}).$
The denoising network $e_\theta(z_t, t, c)$ is then trained to predict the noise $e_t$ added to the latent vector, often incorporating a conditioning vector $c$ (e.g., text or class labels) via an attention mechanism~\cite{vaswani2017attention}. The training objective is,
\begin{equation} 
\label{eq:L_LDM}\mathcal{L}_{\text{LDM}} = \mathbb{E}_{{(x_0,c)\sim D, t \sim U\{1, T\}, e_t \sim \mathcal{N}(0, \mathbb{I})}} \left\| e_t - e_\theta(z_t, t, c) \right\|_2^2.
\end{equation}
After training, a latent sample $z_T$ is drawn from a Gaussian distribution and iteratively denoised to recover $z_0$. Finally, the high-resolution image is reconstructed by passing the denoised latent through the decoder, $\tilde{x} = \mathcal{R}(z_0)$.

By operating in this compressed space, latent diffusion models significantly reduce the training and sampling costs while preserving high perceptual quality. This efficiency makes them a preferred backbone for recent DP synthesis methods, such as DP-LDM~\cite{dpldm} and DP-LoRA~\cite{dplora}.

\subsection{Threat Model}
\label{subsec:threat_model}

We assume data providers hold sensitive images, e.g., medical images, and sharing them directly poses privacy risks. Several approaches propose generating datasets to replace real ones~\cite{realfake}. However, this method does not fully mitigate privacy concerns, as adversaries still infer sensitive information using synthetic images. \toolname{} enforces image-level DP, a \textit{general protection} against various inference attacks targeting individual images, such as membership inference~\cite{2023whithmiadiffusion}. This paradigm has been widely explored in prior works~\cite{pe3,xiedifferentially,gong2025dpimagebench,dp-kernel,lin2020using,yin2022practical,wang2025synthesize,wang2025struct}, and we follow the same approach.

In principle, one solution for DP image synthesis is to directly pre‑train a model entirely on public datasets before applying DP fine‑tuning. This has the advantage of eliminating any pretraining–finetuning overlap concerns (as discussed in~\cite{position}), but it is computationally expensive and slow~\cite{gong2025dpimagebench}. On the other hand, there already exist many high‑quality public models released on open platforms. Therefore, \toolname{} follows the more practical path of directly leveraging these public models. Although pretraining data might partially overlap with sensitive data, we mitigate this concern by evaluating across multiple public models and conducting all comparisons within each individual model. This ensures a fair evaluation.

\section{Methodology}
\label{sec:ass}

This section elaborates on our methods, \toolname, including the motivation, technical details, and privacy analysis.

\subsection{Motivation}
\label{subsec:mot}

Existing works~\cite{dplora,dpldm} restrict fine-tuning to attention layers, a heuristic inherited from non-DP image synthesis methods~\cite{peft,nondpfinetuning}. \gc{In contrast, \toolname{} posits that exhaustive fine-tuning of all attention layers is suboptimal under DP, as such extensive parameter coverage exacerbates noise accumulation, making training collapse. To the best of our knowledge, this issue has not been explicitly addressed in prior DP image synthesis methods~\cite{gong2025dpimagebench,dplora,dpldm,pe3,li2023privimage}. Specifically, as shown in~\Cref{fig:sapf_ratio}, the attention layers of the middle block~\cite{labelembedding} exhibit a high ratio of noise scale and gradient
norm.} These layers are essential in processing feature refinement and text-to-image alignment. Under DP, minor parameter updates can steer generative performance. As shown in~\Cref{tab:moti}, removing fine-tuning of the mid‑block attention layer leads to a substantial improvement in synthetic performance. However, identifying sensitive layers is time‑consuming, as public models typically contain a large number of parameters and complex architectures. \gc{Even worse, this strategy is not DP‑compliant, as manual parameter selection relies on repeated inspection of sensitive data~\cite{koskela2023practical}. This motivates us to use gradient-based metrics to analytically select parameters that improve DP fine-tuning.}

\textit{The first question is how to select the parameters.} The motivation for using gradient as a selection saliency metric is primarily rooted in the Signal-to-Noise Ratio (SNR) during DP training~\cite{Li2022largeDP}. In DP-SGD, a fixed noise is injected into all trainable parameters to satisfy DP. Parameters with small gradients possess a low SNR, meaning their learning signal is easily drowned out by the noise. In contrast, parameters with large gradient magnitudes exhibit higher noise resilience, as their signal is strong enough to remain impactful even after perturbation. By selecting these high-gradient parameters, \toolname{} ensures that the optimization is concentrated on a high-SNR subspace, where the learning gradient dominates the noise.
\gc{\Cref{sec:rq2} shows that \toolname{} can adaptively avoid fine-tuning low SNR parameters in the public model under DP, preventing the model from being derailed by the accumulation of noise in low-signal components.}

\textit{The second question concerns the structural granularity of the selection.} 
Whereas conventional approaches typically finetune entire layers~\cite{dpldm,labelembedding}, 
\toolname{} instead operates at a matrix-wise granularity.  This design choice is grounded in the functional heterogeneity within attention modules. 
Recent work~\cite{attentionlayerimproces} shows that the query, key, and value matrices contribute unequally to fine-tuning performance, with certain matrices (e.g., value metrics in attention~\cite{labelembedding}) exerting substantially greater influence than others. Thus, matrix-wise selection enables a more favorable balance between model expressivity and the distortions introduced by DP.

Experiments conducted in~\Cref{sec:rq2} shows that matrix-wise selection achieves better synthetic performance than layer-wise selection, on studied sensitive image datasets.

\begin{table}[!t]
\small
    \centering
    \caption{ Acc (\%) and FID of synthetic images for {\tt CIFAR-10}, using `Stable-Diffusion-v1-5' as the public model, under $\epsilon=10$. `DP-LoRA w/o Middle' means using DP-LoRA to finetune the public model but excluding the middle block. `LoRA w/o DP' means a non-DP LoRA fine-tuning. }
    \label{tab:moti}
    \setlength{\tabcolsep}{1.2mm}{
    \resizebox{0.48\textwidth}{!}{
    \renewcommand{\arraystretch}{1.2}
    \begin{tabular}{l|cccc}
    \toprule
    \textbf{Metrics}   & DP-LoRA & DP-LoRA w/o Middle  & LoRA w/o DP   & DP-SAPF \\
    \hline
    Acc & 13.3 & 69.7 &  82.4  & 74.6 \\
    FID & 384.3 & 31.3 &  12.9  & 24.6 \\
    \bottomrule
\end{tabular}
}}
\end{table}

\subsection{Saliency-Aware Parameter Selection}
\label{subsec:sapf}

As introduced in~\Cref{subsec:mot}, \toolname{} selects a subset of salient weight matrices from the public model by analyzing the gradients induced by sensitive images. Following prior works~\cite{peft,dpldm,dplora}, \toolname{} further restricts matrix-wise selection to the attention layers. Their parameters exhibit strong responsiveness to task‑relevant information while maintaining reliable training behavior, making them an effective and robust subset for adaptation~\cite{li2025lorada}. Fine-tuning feed‑forward or normalization layers tends to either achieve marginal improvements or introduce training instability~\cite{peft}. 

Each attention layer comprises a set of projection matrices that parameterize the query, key, and value transformations, denoted as $\mathbf{W}^q,\ \mathbf{W}^k,\ \mathbf{W}^v$. These head-specific projections are usually followed by an output projection that linearly combines the concatenated head outputs~\cite{labelembedding}. Our selection targets these three weight matrices in attention layers. 

\begin{algorithm}[!t]
       \caption{Saliency-Aware Parameter Selection.}
      \label{alg:selection}
       \SetKwInOut{Input}{Input}
      \SetKwInOut{Output}{Output}
      \SetKwFunction{SpaticalFeatureQuery}{SpaticalFeatureQuery}
      \SetKwProg{Fn}{Function}{:}{}
    \Input{Sensitive dataset $D_s$ with size $N$ and estimated size $N^\ast$; the public model $M_{\text{pub}}$; the noise scale $\sigma_s$; the weight matrices candidate $\Theta_{\text{full}} = \{\mathbf{W}_i\}_{i=1}^K$.}
    \tcp{{\color{gray} Gradient Calculation}}
    Init clip gradient set $G = \varnothing$; \\
    \While{$x_i \in D_s$}{
        Calculate gradient for $x_i$ on $M$ using~\Cref{eq:L_LDM} and obtain $g_i = [\text{vec}(\nabla_{\mathbf{W}_1} \mathcal{L}_i), \dots, \text{vec}(\nabla_{\mathbf{W}_K} \mathcal{L}_i)]$;\\
        Using~\Cref{eq:joint_clip} to clip the $g_i$ and obtain $\bar{g}_i = \left[\Gamma_{C_s}(g_i)\big|_{\mathbf{W}_1}, \cdots, \Gamma_{C_s}(g_i)\big|_{\mathbf{W}_K}\right]$;\\
        $G = G \cup \bar{g}_i$; \\
    }
    Init average clipped gradient set $S = \varnothing$; \\
    \While{$\textbf{W}_i \in \Theta$}{
        Calculate average clipped gradient $S_k(D_s)=\frac{1}{N^\ast}\sum_{i=1}^{N}\bar g_{i,k},\; \text{where,} \; \bar g_{i,k} = \Gamma_{C_s}(g_i)\big|_{\mathbf{W}_k}$; \\
        Obtain $\tilde{G}_{s,k}$ by using~\Cref{eq:noise_inject}; \\
        $S = S \cup \tilde{G}_{s,k}$; \\
    }
    \tcp{{\color{gray} Parameter Selection}}
     Based on the norms of the noisy gradient $S =[\tilde{G}_{s,1},\cdots,\tilde{G}_{s,K}]$, we select the top‑$c$ matrices:
     $\Theta' = \operatorname{arg\,top\mbox{-} }c \left( \left\{ \left\| \tilde{G}_{s,k} \right\| \right\}_{k=1}^{K} \right);$ \\
    \Output{The saliency-aware parameter matrix set $\Theta'$.}
\end{algorithm}

Let the public model consist of $K$ weight matrices $\Theta_{\text{full}} = \{\mathbf{W}_1, \mathbf{W}_2, \dots, \mathbf{W}_K\}$ that we aim to select. For each sample in the sensitive image dataset $x_i \in D_s, D_s = \{x_i\}_{i=1}^N$, we perform a forward pass to compute the task loss $\mathcal{L}_{\text{LDM}}$, as defined in~\Cref{eq:L_LDM}, followed by backpropagation to obtain the gradients with respect to every candidate matrix, $g_{i} = \left.\left\{\nabla_{\mathbf{W}_j} \mathcal{L}_i \right| j=1,\cdots, K\right\}$. To bound sensitivity, we apply a joint clipping operator $\Gamma_{C_s}$ to the concatenated gradient vector $g_i = [\text{vec}(\nabla_{\mathbf{W}_1} \mathcal{L}_i), \dots, \text{vec}(\nabla_{\mathbf{W}_K} \mathcal{L}_i)]$, where $\text{vec}(\cdot)$ denotes the flattening operator. The joint clipping operator $\Gamma_{C_s} (g_i)$  calculates the average gradient $\bar{g}_i = \Gamma_{C_s} (g_i)$ as follows,
\begin{equation}
\label{eq:joint_clip}
    \bar{g}_i =  g_i \left/ \max\left(1, \frac{\|g_i\|_2}{C_s}\right)\right., \|g_i\|_2 = \sqrt{\sum_{k=1}^K \|\text{vec}(\nabla_{\mathbf{W}_k} \mathcal{L}_i)\|^2}.
\end{equation}
The clipped gradient $\bar g_i$ is then partitioned back into its per‑matrix components
$\bar g_{i,k} = \Gamma_{C_s}(g_i)\big|_{\mathbf{W}_k}$. For each matrix $\mathbf{W}_k \in \Theta_{\text{full}}$, we compute the average clipped gradient $S_k(D_s)$ over the sensitive dataset $D$ of size $N$,
\begin{equation}
\label{eq:noise_inject}
    S_k(D_s)=\frac{1}{N^\ast}\sum_{i=1}^{N}\bar g_{i,k},%
\end{equation}
where $N^*$ approximates sensitive image dataset size $N$ as introduced in Appendix~\ref{apsec:proof}. The private selection criterion is based on the noisy average gradient,
\begin{equation}
    \tilde{G}_{s,k} = S_k(D_s) + \mathcal{N}\left(0, \sigma_s^2 \Delta_s^2 \mathbb{I}\right),
\end{equation}
where $\sigma_s^2$ is a hyperparameter controlling the scale of the injected Gaussian noise, and the global sensitivity is $\Delta_s = C_s / N^\ast$. In \Cref{the:select}, we prove this procedure satisfies DP.

\begin{theorem}
\label{the:select}
    The averaged clipped gradient query $[S_1(D_s),...,S_K(D_s)]$ has global $\ell_2$ sensitivity $\Delta_s = C_s / N^\ast.$
    For any Rényi order $\alpha > 1$, adding Gaussian noise $\mathcal{N}\!\left(0, \sigma_s^2 \Delta_s^2 \mathbb{I}\right)$ to $S_k(D_s)$ for the weight matrices $\{\mathbf{W}_1, \dots, \mathbf{W}_K\}$ ensures that the resulting mechanism satisfies $(\alpha, \gamma_s)$-RDP for $\gamma_s = \frac{\alpha}{2\sigma_s^2}$.
\end{theorem}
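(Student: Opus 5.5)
The proof splits into a sensitivity bound and an application of the Gaussian mechanism in Rényi DP, followed by post-processing for the top-$c$ selection.

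\emph{Step 1: sensitivity.} Treat the whole query $S(D_s)=[S_1(D_s),\dots,S_K(D_s)]$ as one vector in $\mathbb{R}^{d}$, where $d=\sum_k \dim\text{vec}(\mathbf{W}_k)$. By construction of the joint clipping operator in \Cref{eq:joint_clip}, $S(D_s)=\frac{1}{N^\ast}\sum_{i=1}^N \bar g_i$, where $\bar g_i=\Gamma_{C_s}(g_i)$ is the concatenation of the per-matrix pieces $\bar g_{i,k}$, and $\|\bar g_i\|_2\le C_s$.

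Under the add/remove neighboring relation, $D_s'=D_s\cup\{x'\}$ (or the reverse). The crucial point is that $N^\ast$ is a fixed, data-independent normalizer, exactly as $B^\ast$ is in DP-SGD. It therefore does not change between $D_s$ and $D_s'$, and
\[
S(D_s')-S(D_s)=\tfrac{1}{N^\ast}\bar g_{x'} .
\]
Its norm is at most $C_s/N^\ast$. The supremum is attained by any sample whose raw gradient norm is at least $C_s$, so $\Delta_s=C_s/N^\ast$.

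Because clipping is applied jointly to the concatenated vector and not per matrix, the bound holds for the full $K$-block vector at once. This avoids the $\sqrt{K}$ factor that per-matrix clipping would introduce.

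\emph{Step 2: Gaussian mechanism in RDP.} Adding independent noise $\mathcal{N}(0,\sigma_s^2\Delta_s^2\mathbb{I})$ to each block $S_k$ is the same as adding isotropic noise $\mathcal{N}(0,\sigma_s^2\Delta_s^2\mathbb{I}_d)$ to the concatenated vector $S$. The standard result for Gaussians of equal covariance gives
\[
D_\alpha\big(\mathcal{N}(\mu,s^2\mathbb{I})\,\|\,\mathcal{N}(\mu',s^2\mathbb{I})\big)=\frac{\alpha\|\mu-\mu'\|_2^2}{2s^2}.
\]
With $s=\sigma_s\Delta_s$ and $\|\mu-\mu'\|_2\le\Delta_s$, this is at most $\alpha/(2\sigma_s^2)=\gamma_s$, for both directions of neighboring and for every $\alpha>1$. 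If needed, I would verify the divergence formula briefly by completing the square in the Rényi integral, or cite the RDP Gaussian mechanism from the appendix on RDP.

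\emph{Step 3: post-processing.} The top-$c$ selection $\Theta'$ depends only on the noisy vector $[\tilde G_{s,1},\dots,\tilde G_{s,K}]$. By post-processing immunity (stated in the background on DP, and valid for RDP), the released set $\Theta'$ inherits $(\alpha,\gamma_s)$-RDP.

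\emph{Main obstacle.} The delicate step is justifying that the normalizer $N^\ast$ is data-independent. If one instead divided by the true $N$, the averages would change through the denominator as well, and the sensitivity would pick up extra terms. I would handle this by defining $N^\ast$ as either a public constant or a separately privatized count whose cost is accounted elsewhere (as referenced in the appendix). Within this theorem it is then treated as fixed, exactly mirroring the use of $B^\ast$ in DP-SGD.
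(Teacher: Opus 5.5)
Your proposal is correct and follows the paper's proof essentially step for step. Joint clipping gives $\|\bar g_i\|_2\le C_s$, the fixed normalizer $N^\ast$ gives $\Delta_s=C_s/N^\ast$, and the Gaussian RDP bound applied to the concatenated vector (blockwise noise being isotropic noise on the full vector) gives $\gamma_s=\alpha/(2\sigma_s^2)$. Treating $N^\ast$ as data-independent matches the paper's choice of $N^\ast\approx N$ while ignoring the cost of estimating the dataset size, and your post-processing remark for the top-$c$ selection is a harmless addition the paper leaves implicit.
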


\noindent We provide the proof of \Cref{the:select} in~\Cref{apsec:proof}. Based on the norms of the noisy gradient $[\tilde{G}_{s,1},\cdots,\tilde{G}_{s,K}]$, we select the top‑$c$ matrices for fine‑tuning. Formally, the selected parameter subset is defined as,
$$\Theta' = \operatorname{arg\,top\mbox{-} }c \left( \left\{ \left\| \tilde{G}_{s,k} \right\| \right\}_{k=1}^{K} \right),$$
where $c$ denotes the ratio specifying how many matrices with the largest noisy gradient norms are selected for fine‑tuning.

\Cref{alg:selection} describes the workflow of our saliency‑aware parameter selection mechanism, which determines the parameter matrices to be fine‑tuned on sensitive data.

\subsection{DP Fine-Tuning}

Building on prior work on public model fine-tuning (both DP and non-DP image synthesis)~\cite{peft,dplora,labelembedding,nondpfinetuning}, this work adopts LoRA as an efficient adaptation mechanism under DP, given the large parameter counts of public models. Fine-tuning all parameters is time-consuming. LoRA~\cite{lora} trains a public model by injecting low-rank trainable updates into its weight matrices, instead of optimizing all model parameters.
By limiting the number of trainable parameters, LoRA improves computational efficiency and facilitates privacy-preserving optimization, while maintaining the expressive power of the public model. 
The following theorem shows that given a fixed clipping threshold $C$ and noise multiplier $\sigma_d$, DP-SGD applied to full-parameter fine-tuning and LoRA fine-tuning satisfy the same level of $(\epsilon, \delta)$-DP.
\begin{theorem}
\label{the:dplora}
    Given a fixed clipping threshold $C$, noise multiplier $\sigma_d$, sampling rate $q$, and number of iterations $t_d$, the $(\epsilon, \delta)$-DP guarantee of DP-SGD remains identical regardless of whether the training is performed on the full parameter set $\Theta$ or a low-rank subspace $\Theta_{\text{LoRA}}$.
\end{theorem}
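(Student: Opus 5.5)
The plan is to show that the privacy accounting of DP-SGD never looks at the dimension of the parameter space. The guarantee depends only on $(C, \sigma_d, q, t_d)$, through the subsampled Gaussian mechanism and its RDP analysis~\cite{sgm}. I will analyze one iteration as a mechanism acting on an abstract parameter vector $\phi \in \mathbb{R}^d$. This vector is $\theta$ itself for full fine-tuning. For LoRA it is the low-rank factors, with the frozen public weights treated as data-independent constants. The conclusion will then hold for any $d$, in particular for both $\Theta$ and $\Theta_{\text{LoRA}}$.

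\textbf{Per-step sensitivity.} Fix the current iterate $\phi$, which was computed from earlier noisy outputs. For either parameterization the per-example gradient $\nabla_\phi \mathcal{L}(\phi, x_i)$ is some vector in $\mathbb{R}^d$, and after $\text{Clip}(\cdot, C)$ its $\ell_2$ norm is at most $C$. Under add/remove neighbors the clipped sum $\sum_{i} \text{Clip}(\nabla_\phi \mathcal{L}(\phi,x_i), C)$ changes by one term, so its $\ell_2$ sensitivity is $C$. This bound does not depend on $d$ or on how $\phi$ enters the network. The chain rule through the LoRA reparameterization $W = W_0 + BA$ only changes which vector is clipped, not the bound. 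The added noise is $C\,\mathcal{N}(0,\sigma_d^2 \mathbb{I}_d)$. The Gaussian mechanism's R\'enyi divergence between $\mathcal{N}(\mu,\sigma^2 C^2\mathbb{I}_d)$ and $\mathcal{N}(\mu',\sigma^2C^2\mathbb{I}_d)$ is $\alpha\|\mu-\mu'\|_2^2/(2\sigma^2C^2)$. Because the noise is isotropic, this expression depends only on $\|\mu-\mu'\|_2 \le C$ and not on $d$. Hence the per-step mechanism is the same sampled Gaussian mechanism with noise multiplier $\sigma_d$ in both cases. Normalizing by $B^*$ and scaling by $\lambda$ are post-processing.

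\textbf{Subsampling and composition.} Poisson subsampling at rate $q$ then gives per-step RDP $\gamma(\alpha; q, \sigma_d)$ via the sampled Gaussian mechanism bound of~\cite{sgm}. That bound is a function of $(\alpha, q, \sigma_d)$ only. It is obtained by reducing to a one-dimensional mixture $(1-q)\mathcal{N}(0,\sigma_d^2)+q\mathcal{N}(1,\sigma_d^2)$ along the direction of the difference vector. Adaptive composition over $t_d$ steps yields $(\alpha, t_d\gamma)$-RDP. The standard RDP-to-$(\epsilon,\delta)$ conversion gives $\epsilon = \min_\alpha \{ t_d \gamma(\alpha;q,\sigma_d) + \log(1/\delta)/(\alpha-1)\}$, which is identical for both parameterizations.

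\textbf{Main obstacle.} The subtle step is justifying that adaptivity and the different parameterizations do not break the argument. In particular, the LoRA gradient is a nonlinear function of the data and of the previous iterate. I will handle this with the adaptive composition theorem: each step's mechanism only needs a sensitivity bound conditional on the prior outputs. The clipping step supplies that bound uniformly over any iterate and any parameterization. I will also note that the theorem claims equality of the accounting guarantee, not of the actual privacy loss. Since the analysis never depends on $d$, that is exactly what is obtained.
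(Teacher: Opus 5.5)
Your proposal is correct and follows essentially the same route as the paper: the clipped per-example gradient has $\ell_2$ norm at most $C$ whatever the dimension, so under isotropic Gaussian noise the per-step RDP cost is $\alpha/(2\sigma_d^2)$ for both the full and the LoRA parameterization. Your version is more complete than the paper's proof, which explicitly omits subsampling amplification and composition; you cover both through the dimension-free sampled Gaussian mechanism bound and adaptive composition.
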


\noindent We provide the proof of~\Cref{the:dplora} in Appendix~\Cref{apsec:proof}. Based on the saliency-aware selection procedure introduced in~\Cref{subsec:sapf}, we obtain a subset
$\Theta' = \{\mathbf{W}_{1}, \dots, \mathbf{W}_{H}\} \subseteq \Theta_{\text{full}},$ which contains the $H = c\times K$ matrices. LoRA updates are then applied \emph{only} to matrices in $\Theta'$. Formally, for each selected matrix $\mathbf{W}_k \in \Theta'$ with shape $m_k \times h_k$, where $m_k$ and $h_k$ are the dimensions of the weight matrix $\mathbf{W}_k$, LoRA represents the adapted weight as,
\begin{equation}
\label{eq:lora}
\mathbf{W}'_k = \mathbf{W}_k + \Delta \mathbf{W}_k, \quad \Delta \mathbf{W}_k = \mathbf{A}_k\mathbf{B}_k,
\end{equation}
where $\mathbf{A}_k \in \mathbb{R}^{m_k \times r}$ and $\mathbf{B}_k \in \mathbb{R}^{r \times h_k}$ are low-rank matrices, and $r \ll \min(m_k, h_k)$ denotes the rank of the decomposition. $\mathbf{A}_k$ and $\mathbf{B}_k$ are initialized asymmetrically: $\mathbf{A}_k$ is typically sampled from a Gaussian distribution, while $\mathbf{B}_k$ is initialized to zero~\cite{lora}.
This formulation reduces the number of trainable parameters from $m_k \times h_k$ to $r \times (m_k + h_k)$. All original weights in $\Theta_{\text{full}}$ remain frozen. The set of trainable parameters is restricted to the adapters associated with the salient subset, $\Theta_{\text{train}} = \{(\mathbf{A}_k, \mathbf{B}_k) \mid \mathbf{W}_k \in \Theta'\}.$

In \toolname, during each training iteration, for every sample $x_i$ in a sensitive sub-batch 
$D_s^{\text{sub}} = \{x_i\}_{i=1}^{B}$, we compute the gradients with respect to all trainable LoRA
parameters associated with the selected matrices $\Theta'$. 
For each selected weight matrix $\mathbf{W}_k \in \Theta'$, LoRA introduces a pair of low‑rank parameters 
$(\mathbf{A}_k,\mathbf{B}_k)$
Given the latent diffusion objective in \Cref{eq:L_LDM}, we compute the per‑sample gradients, $
(\nabla_{\mathbf{A}_k}\mathcal{L}_i, \nabla_{\mathbf{B}_k}\mathcal{L}_i), \forall\, \mathbf{W}_k \in \Theta'.$

\begin{algorithm}[!t]
       \caption{The workflow of \toolname.}
      \label{alg:dpsapf}
       \SetKwInOut{Input}{Input}
      \SetKwInOut{Output}{Output}
      \SetKwProg{Fn}{Function}{:}{}
    \Input{Public model $M$ with parameter matrices $\Theta'$; the number of matrix in $\Theta'$, $H$; sensitive dataset $D_s$; learning rate $\lambda$; finetuning batch size $B$ and estimated size $B^\ast$; the noise scale $\sigma_d$; the clip bound $C$; finetuning iteration $t_d$.}
    \tcp{{\color{gray} Private Fine-tuning}}
    Init $h=0$; \\
    Init $\Theta_{\text{train}} = \{(\mathbf{A}_k, \mathbf{B}_k) \mid \mathbf{W}_k \in \Theta', k=[1,H]\}$, where $A_k \sim \mathcal{N}(0, \mathbb{I})$ and $\mathbf{B}_k$ is initialized to zero; \\
    \While{$h < t_d$}{
        Sample subset $D_{s}^{\text{sub}}=\{x_i\}_{i=1}^{B}$ from $D_s$; \\
        Init gradient set $G = \varnothing$;\\
        \For{$ x_i \in D_{s}^{\text{sub}}$}{
            Calculate the gradient for $x_i$, using \Cref{eq:L_LDM}, $(\nabla_{\mathbf{A}_k}\mathcal{L}_i, \nabla_{\mathbf{B}_k}\mathcal{L}_i), \forall\, \mathbf{W}_k \in \Theta';$ \\
            Calculate the aggregate using $g_i = \bigoplus_{\mathbf{W}_k \in \Theta'} \left[\mathrm{vec}\!\left(\nabla_{\mathbf{A}_k}\mathcal{L}_i\right)\oplus\mathrm{vec}\!\left(\nabla_{\mathbf{B}_k}\mathcal{L}_i\right)\right];$ \\
            Using~\Cref{eq:dplora_clip} to clip $g_i$ and obtain $\bar{g}_i$; \\
            $G = G \cup \bar{g}_i$; \\
        }
        Calculate the privatized aggregated gradient, $g_p = \lambda \left( \frac{1}{{B^*}} \sum_{\bar{g}_i \in G} \bar{g}_i + \frac{C}{{B^*}} \mathcal{N}(0, \sigma_d^2 \mathbb{I}) \right)$; \\
        The $g_p$ is  used to update $(\mathbf{A}_k,\mathbf{B}_k)$ for all $\mathbf{W}_k \in \Theta'$; \\
        $h = h + 1$; \\
    }
    Obtain $\mathbf{W}_k'$ for all $ \mathbf{W}_k \in \Theta'$ using~\Cref{eq:lora}; \\
    \Output{The well-finetuned public model $M'$.}
\end{algorithm}

To satisfy DP, we construct an aggregate gradient vector by vectorizing and concatenating the gradients of all selected matrix adapters, which is defined as follows,
\begin{equation} \label{eq:aggregate_grad} 
    g_i = \bigoplus_{\mathbf{W}_k \in \Theta'} 
    \left[
        \mathrm{vec}\!\left(\nabla_{\mathbf{A}_k}\mathcal{L}_i\right)
        \oplus
        \mathrm{vec}\!\left(\nabla_{\mathbf{B}_k}\mathcal{L}_i\right)
    \right],
\end{equation}
where $\mathrm{vec}(\cdot)$ denotes the vectorization operator and $\oplus$ signifies concatenation. To bound the global $\ell_2$-sensitivity of the model update, the combined gradient vector $g_i$ is constrained by a clipping threshold $C$,
\begin{equation}
\label{eq:dplora_clip}
    \bar{g}_i = g_i \left/ \max\left(1, \frac{\|g_i\|_2}{C}\right)\right.,
\end{equation}
where $\|g_i\|_2^2 
= \sum_{\mathbf{W}_k \in \Theta'}
\left(
    \|\mathrm{vec}(\nabla_{\mathbf{A}_k}\mathcal{L}_i)\|_2^2
    +
    \|\mathrm{vec}(\nabla_{\mathbf{B}_k}\mathcal{L}_i)\|_2^2
\right)$. Following the clipping step, the gradients are averaged across the sub-batch and perturbed with Gaussian noise as defined in~\Cref{eq:dpsgd} to satisfy the $(\epsilon, \delta)$-DP requirement. These privatized gradients are then used to update only the
LoRA parameters $(\mathbf{A}_k,\mathbf{B}_k)$ for all $\mathbf{W}_k \in \Theta'$. After private training, $(\mathbf{A}_k,\mathbf{B}_k)$ are incorporated into the effective weights $\textbf{W}_k'$ in~\Cref{eq:lora} for all $\mathbf{W}_k \in \Theta'$.

Detailed text-processing is provided in \Cref{apsubsec:text}, and the workflow of \toolname{} is summarized in Algorithm~\ref{alg:dpsapf}.

\begin{table}[!t]
\footnotesize
    \centering
    \caption{Summary of sensitive image datasets, including resolution, data split statistics, and the number of categories.}
    \label{tab:datainfo}
    \resizebox{0.48\textwidth}{!}{
    \begin{tabular}{l|ccccc}
    \toprule
    \textbf{Datasets} & \textbf{Training} & \textbf{Validation} & \textbf{Test} & \textbf{Resolution} & \textbf{Category} \\
    \hline
    {\tt CIFAR-10} & 45,000& 5,000 & 10,000 & 32$\times$32 & 10\\
    {\tt OCTMNIST} & 97,477 & 10,832 & 1,000 & 128$\times$128 & 4 \\
    {\tt CelebA} & 162,770 & 19,867 & 19,962 & 256$\times$256 & 2\\
    {\tt Camelyon} & 302,436 & 34,904  & 85,054 & 96$\times$96 & 2\\
    \bottomrule
\end{tabular}
}
\end{table}

\subsection{Privacy Analysis}
\label{subsec:privacy_analysis}

In \toolname{}, two components consume the privacy budget: (1) the parameter-selection stage that computes per-sample gradients, clips them to the $\ell_2$ bound $C_s$, and adds Gaussian noise; and (2) the fine-tuning stage that trains the selected parameters using DP-SGD. The parameter-selection stage is a single Gaussian mechanism that satisfies $(\alpha, \gamma_s)$-RDP. During DP-SGD, each iteration samples a sub-batch with ratio $q$ and injects Gaussian noise with scale $\sigma_d$, which corresponds to an SGM~\cite{sgm}, as introduced in~\Cref{app:supp_dp}. According to the RDP analysis of SGM~\cite{sgm,dpsgd}, the $t_d$ fine-tuning iterations incur an RDP cost $(\alpha,\gamma_d(\sigma_d))$, where $\gamma_d$ depends on $q$, $t_d$, and $\sigma_d$. Because RDP composes linearly~\cite{rdp}, the overall privacy guarantee of \toolname{} is $(\alpha, \gamma_s + \gamma_d)$. 

To better understand how the privacy budget is distributed across the two stages, we define the privacy budget ratios as, $r_s = \gamma_s/(\gamma_s + \gamma_d), \;
r_d = \gamma_d/(\gamma_s + \gamma_d), $
which measures the fraction of the total RDP cost consumed by the parameter-selection stage and the DP-SGD fine-tuning stage, respectively. These ratios quantify the extent to which each stage contributes to overall privacy protection.

To ensure that \toolname{} satisfies a target $(\varepsilon,\delta)$-DP, we choose the privacy parameters in three steps. (1) We first set the noise scale $\sigma_s$ to determine the RDP cost $(\alpha,\gamma_s)$ for the feature-query stage. (2) We fix the number of fine-tuning iterations $t_d$ and the sampling ratio $q$, under which the RDP cost of DP-SGD becomes a function of the noise scale $\sigma_d$, denoted by $(\alpha,\gamma_d(\sigma_d))$. (3) We search over different values of $\sigma_d$, convert the composed RDP cost $(\alpha, \gamma_s+\gamma_d(\sigma_d))$ to its corresponding $(\varepsilon,\delta)$-DP guarantee~\cite{rdp}, and select the smallest $\sigma_d$ that satisfies the given privacy budget. The definitions and more details of RDP in \toolname{} are provided in~\Cref{app:supp_dp}.

\begin{table}[!t]
\footnotesize
    \centering
    \caption{Summary of public models used in our experiments.}
    \label{tab:publicdatainfo}
    \resizebox{0.48\textwidth}{!}{
    \begin{tabular}{l|ccccc}
    \toprule
    \textbf{Public Model} & \textbf{Source} & \textbf{Resolution} & \textbf{Size} & \textbf{Year} \\
    \hline
     Stable-Diffusion-v1-5\textsuperscript{\ref{sdv15}} & Stability AI & 512$\times$512 & 1B & 2022 \\
     Stable-Diffusion-2-1-base\textsuperscript{\ref{sd21base}} &  Stability AI&  512$\times$512 & 1B & 2022 \\
     Realistic-v6\textsuperscript{\ref{realistic}} &  Hugging Face& 896$\times$896 & 1B & 2024 \\
     Prompt2med\textsuperscript{\ref{promptmed}} & Hugging Face & 512$\times$512 & 1B & 2024 \\
    \bottomrule
\end{tabular}
}
\end{table}

\Cref{sec:rq3} shows that under our default configuration ($\sigma_s=5, \epsilon=10$), the privacy overhead of the parameter-selection stage accounts for only 0.17\% of the total privacy budget. This indicates that the parameter selection is highly budget-efficient, preserving the vast majority of the privacy allowance for the subsequent fine-tuning phase.
\section{Experimental Setup}
\label{sec:setup}

\begin{table*}[!t]
\renewcommand{\arraystretch}{1.1}
\setlength{\tabcolsep}{8.5pt}
\small
\centering
\caption{FID and Acc (\%) of synthetic images generated by different public models with \toolname{} and baselines under $\epsilon=\{1,10\}$, on {\tt CIFAR-10}, {\tt OCTMNIST}, {\tt CelebA}, and {\tt Camelyon}. The best values are highlighted in bold in each column. }
\label{tab:rq1}
\resizebox{1.0\textwidth}{!}{
\begin{tabular}{l|>{\centering\arraybackslash}l|>{\centering\arraybackslash}c>{\centering\arraybackslash}c|>{\centering\arraybackslash}c>{\centering\arraybackslash}c|>{\centering\arraybackslash}c>{\centering\arraybackslash}c|>{\centering\arraybackslash}c>{\centering\arraybackslash}c|>{\centering\arraybackslash}c>{\centering\arraybackslash}c|>{\centering\arraybackslash}c>{\centering\arraybackslash}c|>{\centering\arraybackslash}c>{\centering\arraybackslash}c|>{\centering\arraybackslash}c>{\centering\arraybackslash}c}
\toprule
\multirow{3}{*}{\textbf{Public Model}} & \multirow{3}{*}{\textbf{Method}}  & \multicolumn{4}{c|}{{\tt CIFAR-10}} & \multicolumn{4}{c|}{{\tt OCTMNIST}} & \multicolumn{4}{c|}{{\tt CelebA}} & \multicolumn{4}{c}{{\tt Camelyon}} \\
\cline{3-18}
& & \multicolumn{2}{c|}{\textbf{$\epsilon=1$}} & \multicolumn{2}{c|}{\textbf{$\epsilon=10$}} & \multicolumn{2}{c|}{\textbf{$\epsilon=1$}} & \multicolumn{2}{c|}{\textbf{$\epsilon=10$}}  & \multicolumn{2}{c|}{\textbf{$\epsilon=1$}} & \multicolumn{2}{c|}{\textbf{$\epsilon=10$}} & \multicolumn{2}{c|}{\textbf{$\epsilon=1$}} & \multicolumn{2}{c}{\textbf{$\epsilon=10$}} \\
\cline{3-18}
  & & \textbf{FID} & \textbf{Acc} & \textbf{FID} & \textbf{Acc} & \textbf{FID} & \textbf{Acc} & \textbf{FID} & \textbf{Acc} & \textbf{FID} & \textbf{Acc} & \textbf{FID} & \textbf{Acc} & \textbf{FID} & \textbf{Acc} & \textbf{FID} & \textbf{Acc} \\
\midrule
\multirow{6}{*}{SD-v1-5} & PE & \textbf{16.4} & 67.5 & \textbf{10.9} & 66.9 &  132.5 & 27.1 & 104.2 & 26.1 & 44.6 & 68.0  & 40.4 & 64.1 & \textbf{62.3} & 69.7 & 55.9 & 50.4 \\
& Aug-PE & 24.0 & 44.8 & 12.9 & 49.8 & 143.2 & 28.4 & 92.1 & 30.2 & 41.1 &  70.2 & 42.5 & 65.6 & 63.1  & 70.1 & \textbf{51.9} & 64.9 \\
& DP-LDM & 240.4 & 13.0 & 183.8 & 14.6 & 398.2 & 25.0 & 347.8 & 25.0 & 312.8  & 57.3 & 281.6  & 61.0& 405.8 & 60.4  & 248.8 & 69.8 \\
& DP-LoRA & 366.0 &  12.8 & 384.3 & 13.3 & 320.3 &  25.0 & 347.0 &  25.0 & 244.7 &  61.4 & 447.7 &  66.3 & 513.7 &  56.3 & 437.1 &  62.1 \\
& DP-Finetune & 191.9 & 15.0 & 145.2 & 14.2 & 397.8 & 31.9 & 345.2 & 30.9 & 277.9 & 60.8 & 274.6 & 61.7 & 242.8  & 57.6 & 187.6 & 65.3 \\
\cline{2-18}
& \cellcolor{gray!20}{\toolname} & \cellcolor{gray!20}{32.1} & \cellcolor{gray!20}{\textbf{73.1}} & \cellcolor{gray!20}{26.6} & \cellcolor{gray!20}{\textbf{74.6}} & \cellcolor{gray!20}{\textbf{85.4}} & \cellcolor{gray!20}{\textbf{44.6}} & \cellcolor{gray!20}{\textbf{77.9}} & \cellcolor{gray!20}{\textbf{46.2}} & \cellcolor{gray!20}{\textbf{25.0}} & \cellcolor{gray!20}{\textbf{84.2}} & \cellcolor{gray!20}{\textbf{23.6}} & \cellcolor{gray!20}{\textbf{90.2}} & \cellcolor{gray!20}{139.0} & \cellcolor{gray!20}{\textbf{72.5}} & \cellcolor{gray!20}{145.0} & \cellcolor{gray!20}{\textbf{79.9}} \\
\hline
\multirow{6}{*}{SD-2-1-base} & PE & \textbf{17.4} & 66.3 & 12.1 & 68.5 & 135.6 & 25.3 & 112.1 & 27.4 & 42.8 &  70.2 & 36.8 & 73.0 &  \textbf{68.4} & 66.6 & 60.4 & 60.9 \\
& Aug-PE & 25.6 & 61.3 & \textbf{11.1} & 66.9 & 132.4 & 26.0 & 110.3 & 28.1 & 40.2 & 73.2  & 38.3 & 74.7 &  69.4 & 65.2 & 63.4 & 62.3 \\
& DP-LDM & 119.4 & 15.3 & 142.1 & 16.1 & 304.5 & 25.0 & 317.9 & 26.3 & 299.3 & 60.1  & 220.5 & 63.1 &  388.8 & 52.3 & 300.0 & 62.7 \\
& DP-LoRA & 301.1 & 11.7 & 289.4 & 14.9 & 377.3 & 25.0 & 333.5 & 25.0 & 231.5 &  59.3 & 200.4 & 64.1 & 440.4  & 50.2 & 377.9 & 60.2  \\
& DP-Finetune & 155.5 & 14.4 & 124.0 & 16.2 & 394.2 & 25.0 & 298.1 & 26.3 & 268.3 &  61.2 & 231.6 & 66.2 & 290.3  & 60.0 & 188.7 & 59.3 \\
\cline{2-18}
& \cellcolor{gray!20}{\toolname} & \cellcolor{gray!20}{32.1} & \cellcolor{gray!20}{\textbf{69.9}} & \cellcolor{gray!20}{27.2} & \cellcolor{gray!20}{\textbf{72.3}}  & \cellcolor{gray!20}{\textbf{89.3}} & \cellcolor{gray!20}{\textbf{41.4}} & \cellcolor{gray!20}{\textbf{80.0}} & \cellcolor{gray!20}{\textbf{42.9}}  & \cellcolor{gray!20}{\textbf{27.2}} & \cellcolor{gray!20}{\textbf{85.0}} & \cellcolor{gray!20}{\textbf{24.0}} & \cellcolor{gray!20}{\textbf{90.5}} & \cellcolor{gray!20}{70.4}  & \cellcolor{gray!20}{\textbf{74.1}} & \cellcolor{gray!20}{\textbf{55.3}} & \cellcolor{gray!20}{\textbf{78.2}} \\
\hline
\multirow{6}{*}{Realistic-v6} & PE & 28.5 & 63.5 & 23.3 & \textbf{65.6} & 164.9 & 25.0 & 133.5 & 26.6 & 37.9 &  71.4 & 38.9 & 73.8 & 84.8  & 65.5 & \textbf{79.5} & 64.3 \\
& Aug-PE & \textbf{25.6} & 61.9 & \textbf{21.1} & 63.2 & 177.3 & 25.0 & 132.0 & 26.4 & 32.1 &  74.5 & 40.3 & 70.3 & \textbf{79.4} &  69.6 & 85.5  & 66.2 \\
& DP-LDM & 143.2 & 32.3 & 122.2 & 36.3 & 234.1 & \textbf{26.3} & 302.8 & 25.0 & 294.3 & 62.2 & 238.4 & 56.7 &  384.3 & 50.6 & 343.2 & 51.7 \\
& DP-LoRA & 283.2 & 12.2 & 199.3 & 13.9 & 300.8 & 25.0 & 267.4 & 27.3 & 233.2 &  64.3 & 241.8 & 61.5 &  487.2 & 52.4 & 399.3 & 56.5 \\
& DP-Finetune & 200.7 & 11.9 & 184.7 & 12.3 & 326.4 & 25.0 & 307.2 & 25.0 & 251.4 & 59.3  & 195.8 & 60.0 & 321.9  & 58.2 & 285.7 & 57.7 \\
\cline{2-18}
& \cellcolor{gray!20}{\toolname} & \cellcolor{gray!20}{35.6} & \cellcolor{gray!20}{\textbf{68.2}} & \cellcolor{gray!20}{30.8} & \cellcolor{gray!20}{\textbf{68.0}} & \cellcolor{gray!20}{\textbf{161.0}} & \cellcolor{gray!20}{25.0} & \cellcolor{gray!20}{\textbf{81.7}} & \cellcolor{gray!20}{\textbf{38.5}} & \cellcolor{gray!20}{\textbf{23.5}} & \cellcolor{gray!20}{\textbf{88.5}}  & \cellcolor{gray!20}{\textbf{19.2}} & \cellcolor{gray!20}{\textbf{92.1}} & \cellcolor{gray!20}{125.2}  & \cellcolor{gray!20}{\textbf{79.8}} & \cellcolor{gray!20}{118.6} & \cellcolor{gray!20}{\textbf{71.7}} \\
\hline
\multirow{6}{*}{Prompt2med} & PE & 24.0 & 64.8 & 61.8 & 37.5 & 84.4 & 36.3 & \textbf{39.5} & 39.1 & 44.5 & 70.1  & 39.7 & 68.4 & 41.2 & 65.3 & 40.7 & 70.2 \\
& Aug-PE & \textbf{21.3} & 65.5 & 35.0 & 67.7 & \textbf{81.5} & 36.0 & 48.2 & 36.8 & 47.3 & 72.2  & 40.0 & 72.8 &  \textbf{39.0} & 70.1 & \textbf{40.4} & 71.3 \\
& DP-LDM & 265.0 & 13.2 & 231.8 & 12.1 & 277.3 & 25.0 & 197.4 & 26.2 & 214.7 & 61.4  &  200.6 & 65.8 &  376.3 & 58.9 & 326.4 & 59.0 \\
& DP-LoRA & 374.5 & 11.4  & 265.9 & 13.2 & 282.2 & 25.0 & 243.7 & 25.0 & 299.2 & 50.0 & 235.5 & 59.9 & 388.3 & 54.7 & 321.7 & 50.5 \\
& DP-Finetune & 203.5 & 13.7 & 146.3 & 12.5 & 255.5 & 25.4 & 203.9 & 27.5 & 224.9 & 57.4 & 195.3 & 60.0 & 343.9 & 51.9 & 288.5 & 53.4 \\
\cline{2-18}
& \cellcolor{gray!20}{\toolname} & \cellcolor{gray!20}{30.2} & \cellcolor{gray!20}{\textbf{70.8}} & \cellcolor{gray!20}{\textbf{25.4}} & \cellcolor{gray!20}{\textbf{72.8}} & \cellcolor{gray!20}{108.1} & \cellcolor{gray!20}{\textbf{42.4}} & \cellcolor{gray!20}{99.1} & \cellcolor{gray!20}{\textbf{43.4}} & \cellcolor{gray!20}{\textbf{28.4}} & \cellcolor{gray!20}{\textbf{83.9}}  & \cellcolor{gray!20}{\textbf{26.0}} & \cellcolor{gray!20}{\textbf{88.5}} & \cellcolor{gray!20}{96.3} & \cellcolor{gray!20}{\textbf{78.9}}  & \cellcolor{gray!20}{89.4} & \cellcolor{gray!20}{\textbf{80.1}} \\
\bottomrule
\end{tabular}
}
\end{table*}

\begin{figure*}[htbp]
\centering
\setlength{\tabcolsep}{2pt} 
\renewcommand{\arraystretch}{1.0}

\begin{tabular}{c *{4}{c}} 
    \multicolumn{1}{c}{} & {\small\sffamily CIFAR-10} & {\small\sffamily  OCTMNIST} & {\small\sffamily  CelebA} & {\small\sffamily  Camelyon} \\

    \raisebox{0.25\height}{\rotatebox{90}{\small\sffamily  Synthetic}} 
    & \includegraphics[width=0.24\textwidth]{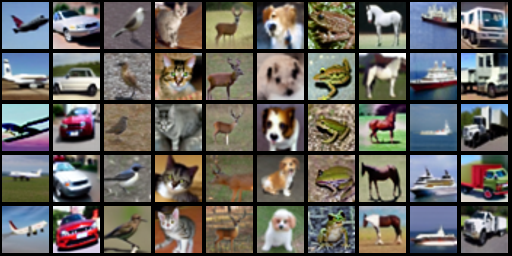}
    & \includegraphics[width=0.24\textwidth]{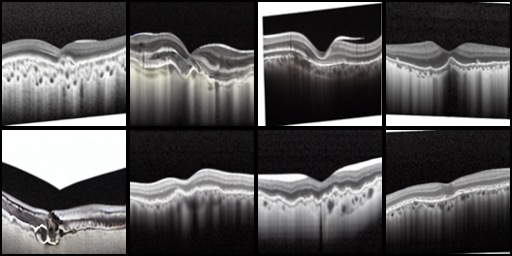}
    & \includegraphics[width=0.24\textwidth]{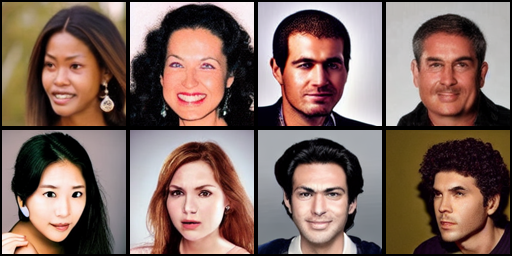}
    & \includegraphics[width=0.24\textwidth]{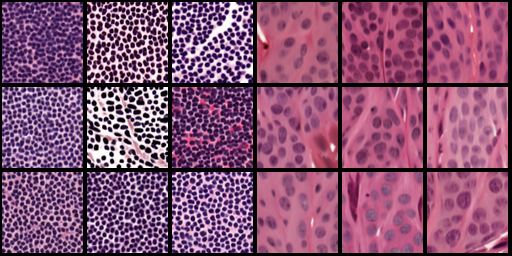} \\
    
    \raisebox{1.2\height}{\rotatebox{90}{\small\sffamily  Real}} 
    & \includegraphics[width=0.24\textwidth]{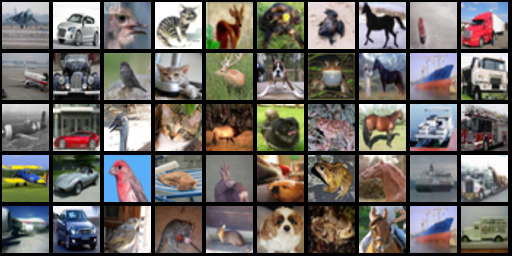}
    & \includegraphics[width=0.24\textwidth]{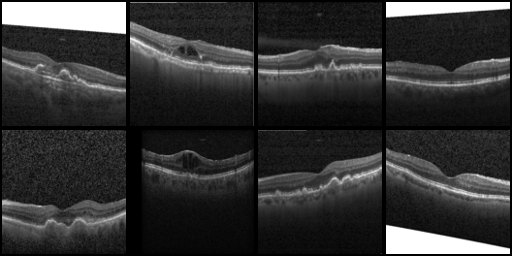}
    & \includegraphics[width=0.24\textwidth]{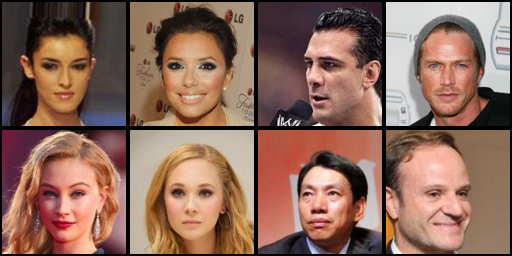}
    & \includegraphics[width=0.24\textwidth]{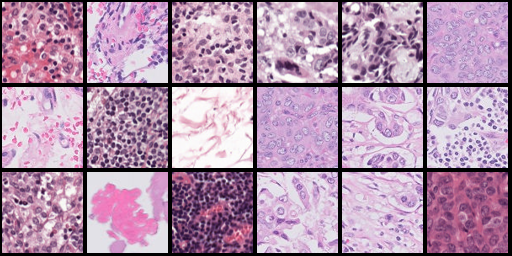} \\
\end{tabular}

\caption{Visualization examples of comparison between synthetic and real images across four datasets. The synthetic images are generated using \toolname{}, under the privacy budget $\epsilon=10.0$. }
\label{fig:synthetic_real}
\end{figure*}

\noindent \textbf{Baselines.} We elaborate on current DP image synthesis methods in~\Cref{sec:related}. This paper focuses on methods that leverage public models. Built on this constraint, we select five predominant DP image synthesis methods that use public models to aid DP image synthesis, including PE~\cite{dpsda}, Aug-PE~\cite{xiedifferentially}, DP-LDM~\cite{dpldm}, DP-LoRA~\cite{dplora}, and DP-Finetune~\cite{dp-diffusion}. 
PE and Aug-PE are fine-tuning free methods. Besides, DP-LDM, DP-LoRA, and DP-Finetune are fine-tuning-based methods.

\vspace{0.5mm}
\noindent \textbf{Implementations.} All experiments are implemented with Python 3.8 on a server with 4 NVIDIA GeForce A6000 Ada and 512GB of memory. We aim at conditional generation for these datasets (i.e., each generated image is associated with the class label). Following practical adoption in DPImageBench~\cite{gong2025dpimagebench}, we set DP parameter $\delta = 1/ (N_{\text{priv}} \times \log N_{\text{priv}})$, where $N_{\text{priv}}$ means the number of
samples in training private datasets as presented in Table~\ref{tab:datainfo}. \Cref{apsubsec:hyper} provides further details on our hyperparameter settings.

\vspace{0.5mm}
\noindent \textbf{Investigated Datasets.} We perform experiments on four image datasets {\tt CIFAR-10}~\cite{cifar10}, {\tt OCTMNIST}~\cite{yang2023medmnist}, {\tt CelebA}~\cite{celeba}, and {\tt Camelyon}~\cite{camelyon1}. The investigated datasets are prevalently used in previous DP image synthesis methods~\cite{dp-diffusion,dp-feta}.

\texttt{CIFAR-10} comprises 10 classes of natural images with the resolution of $32\times 32$. \texttt{OCTMNIST} consists of 109,309 retinal Optical Coherence Tomography (OCT) images categorized into 4 classes with the resolution of $128\times 128$~\cite{yang2023medmnist}. \texttt{CelebA} contains over 202,599 facial images of 10,177 celebrities, each annotated with 40 attributes with the resolution of $256\times 256$; following prior work~\cite{dpsda,dp-feta,gong2025dpimagebench}, we use the ``Gender'' attribute to classify images as male or female. \texttt{Camelyon} includes 455,954 histopathological image patches of human tissue with the resolution of $96\times 96$, labeled based on the presence of at least one tumor cell pixel. As shown in Table~\ref{tab:datainfo}, all datasets are split into training, validation, and test sets. It is  noticed that some prior works~\cite{dp-feta,li2023privimage} use downsampled versions of the {\tt CelebA} and {\tt Camelyon} datasets at a resolution of $32 \times 32$. In contrast, \toolname{} uses the original-resolution images, making our experiments more challenging than those in these earlier studies~\cite{gong2025dpimagebench}.

\vspace{1.0mm}
\noindent \textbf{Public Models.} We evaluate our method using four widely-used public models, including `Stable-Diffusion-v1-5\textsuperscript{\ref{sdv15}}', `Stable-Diffusion-2-1-base\footnote{\url{https://huggingface.co/Manojb/stable-diffusion-2-1-base}\label{sd21base}}', `Realistic-v6\footnote{\url{https://huggingface.co/SG161222/Realistic_Vision_V6.0_B1_noVAE}\label{realistic}}', and `Prompt2med\footnote{\url{https://huggingface.co/Nihirc/Prompt2MedImage}\label{promptmed}}'. 
Given the dominance of the stable diffusion library\footnote{\url{https://huggingface.co/stabilityai} \label{sd}}~\cite{peft,nondpfinetuning}, our selection includes two official stable diffusion base models and two specialized finetuned versions: `Realistic-v6' for photorealism and `Prompt2med' for medical imaging. \Cref{tab:publicdatainfo} presents the details of public models. 

\vspace{1.0mm}
\noindent \textbf{Evaluation Metrics.} We assess both fidelity and utility using two established metrics \cite{dpdm, dp-feta, gong2025dpimagebench}: Fréchet Inception Distance (FID) and downstream classification accuracy. Specifically, we generate 60,000 synthetic images for evaluation. Our experimental setup is grounded in the standardized evaluation framework proposed by DPImageBench~\cite{gong2025dpimagebench}.

\section{Experiment Analysis}
\label{sec:exp}
This section investigates the effectiveness of \toolname{} through answering three research questions (RQs) as follows. 
\begin{itemize}[leftmargin=*]
    \item \textbf{RQ1.} Does \toolname{} outperform baseline methods in generating high-quality images across the four diverse datasets?
    \item  \textbf{RQ2.} How does saliency-aware parameter selection in public models benefit the DP image synthesis?
    \item  \textbf{RQ3.} How do the hyperparameters introduced by \toolname{} affect the performance of synthetic images?
\end{itemize}

\subsection{Performance of Synthetic Datasets (RQ1)}
\label{sec:rq1}
This RQ evaluates the utility and fidelity of synthetic images generated by \toolname{} relative to five baselines. Using the four datasets detailed in \Cref{sec:setup}, we perform a comparative analysis under privacy budgets of $\epsilon=\{1, 10\}$. \Cref{tab:rq1} presents the FID and Acc (\%) of \toolname{} and baselines. \Cref{fig:synthetic_real} shows visualization examples of comparison between synthetic images generated by \toolname{} and real images across four studied datasets, under $\epsilon=10$.  We summarize the takeaways from this RQ as follows.

\vspace{1mm}
\noindent {\textbf{In most cases, \toolname{} generates synthetic images with higher utility and fidelity compared to baselines.}} \Cref{tab:rq1} shows that \toolname{} achieves the highest accuracy across nearly all settings; the sole exception is on {\tt OCTMNIST} under $\epsilon=1$ when using `Realistic-v6' as the public model, however, all methods do not work in this case. \toolname{} achieves an accuracy of 25.0\%, trailing the top-performing baseline (which is 26.3\%) by only a narrow margin.

Besides, in terms of FID, \toolname{} achieves lower (superior) scores on {\tt CIFAR-10} and {\tt CelebA} datasets compared to fine-tuning-free methods such as PE and Aug-PE. Specifically, since these fine-tuning-free methods preserve the pretrained weights without modification, the synthetic images strictly adhere to the strong prior distribution of the public model, resulting in high visual fidelity and lower FID scores. However, they lack deep adaptation to the label information of specific downstream tasks. When the downstream domain, such as the {\tt OCTMNIST} medical imaging dataset, exhibits a distribution shift from the natural images used in pretraining, PE-based methods struggle to capture the critical discriminative features necessary for effective classification, leading to inferior accuracy compared to \toolname{}. DPImageBench~\cite{gong2025dpimagebench} reports a similar phenomenon, noting that while the FID of synthetic images from PE is lower than that of state-of-the-art methods, its Acc is also consistently lower.

\begin{table}[!t]
\small
    \centering
    \caption{The Acc (\%) and FID of synthetic images for {\tt CIFAR-10} and {\tt CelebA}, under $\epsilon=10.0$. We only finetune the query parameter matrices of cross-attention.  The public model is `Stable-Diffusion-v1-5.'}
    \label{tab:carefully}
    \setlength{\tabcolsep}{4.2mm}{
    \resizebox{0.48\textwidth}{!}{
    \renewcommand{\arraystretch}{1.2}
    \begin{tabular}{l|cc|cc|cc}
    \toprule
    \multirow{2}{*}{\textbf{Datasets}}  & \multicolumn{2}{c|}{\textbf{DP-LoRA}} & \multicolumn{2}{c|}{\textbf{DP-LDM}}  & \multicolumn{2}{c}{\textbf{DP-Finetune}} \\
    \cline{2-7}
     & \textbf{FID} & \textbf{Acc} & \textbf{FID} & \textbf{Acc} & \textbf{FID} & \textbf{Acc} \\
    \midrule
    {\tt CIFAR-10} & 25.4 & 72.0  & 36.2 & 66.0 & 132.7 & 32.3 \\
    {\tt CelebA} & 34.0 & 85.8 & 40.6 & 80.8 & 198.0 & 61.1 \\
    \bottomrule
\end{tabular}
}}
\end{table}

\begin{figure*}[!t]
    \centering
    \setlength{\abovecaptionskip}{0pt}
    \includegraphics[width=1.0\linewidth]{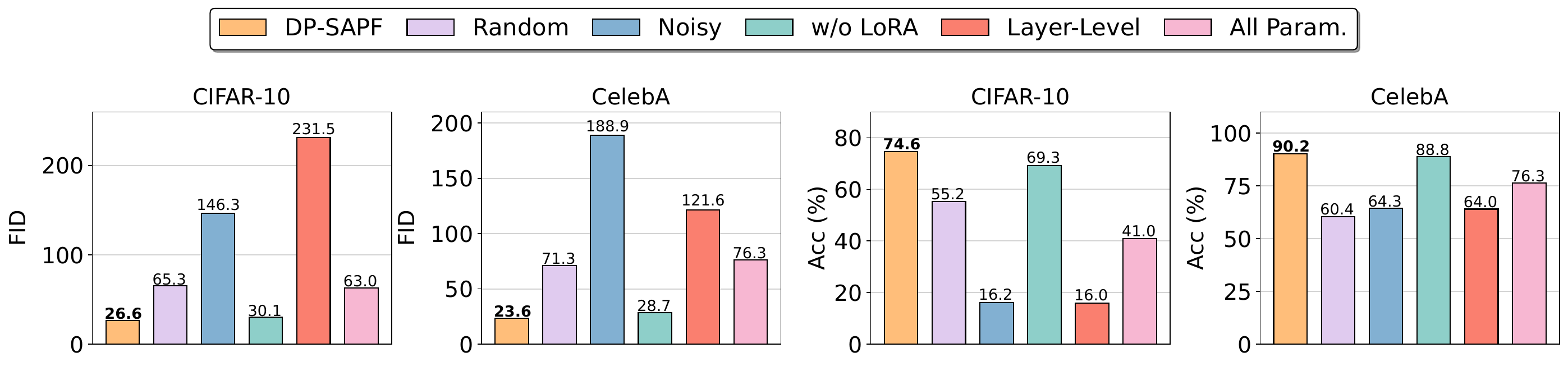}
    \vspace{1mm}
    \vspace{-4mm}
    \caption{FID and Acc (\%) of \toolname{} and five variants with $\varepsilon=10.0$. `Random' indicates no saliency-aware mechanism. `Noisy' replaces the unselected parameters with random values. `w/o LoRA' finetunes the select parameter matrices directly using DP-SGD, without incorporating LoRA. `Layer-Level' means conducting layer-wise selection. `\toolname' is our work that conducts matrix-wise saliency-aware parameter selection and finetunes the selected parameters using DP-SGD with LoRA.}
    \label{fig:rq2}
\end{figure*}

\begin{figure*}[!t]
    \centering
    \setlength{\abovecaptionskip}{0pt}
    \includegraphics[width=1.0\linewidth]{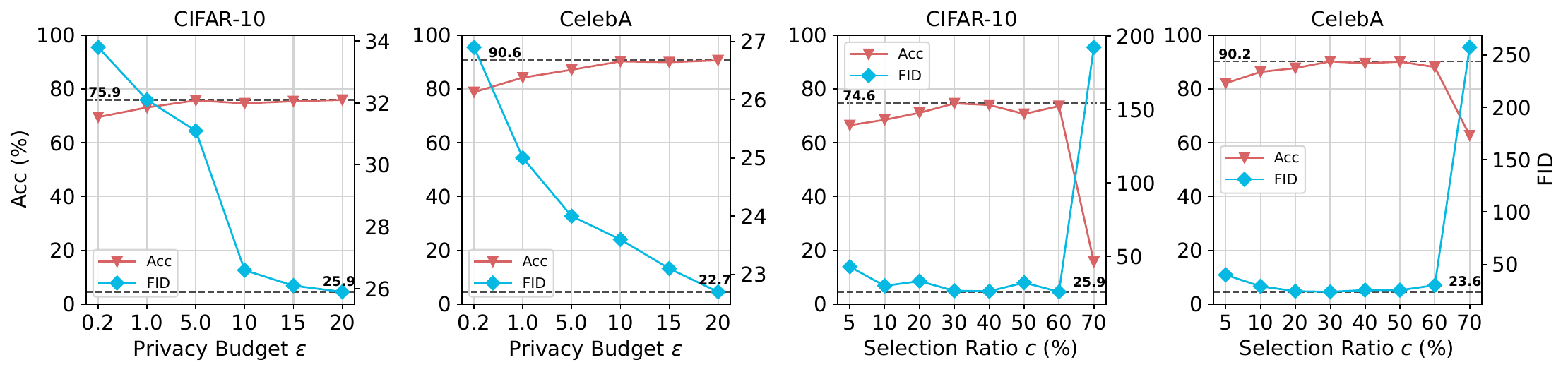}
    \vspace{1mm}
    \vspace{-4mm}
    \caption{The Acc (\%) and FID of synthetic images generated by \toolname{} for the sensitive image datasets, {\tt CIFAR-10} and {\tt CelebA}, under privacy budgets $\epsilon = \{0.2, 1.0, 5.0, 10.0, 15.0, 20.0\}$, and selection ration $c=\{5\%, 10\%, 20\%, 30\%, 40\%, 50\%, 60\%, 70\%\}$. The public model is `Stable-Diffusion-v1-5'. The dashed lines indicate the best value.}
    \label{fig:privacy_budget}
\end{figure*}

\vspace{1mm}
\noindent {\textbf{Fine-tuning all attention layers within public models often leads to training instability or model collapse.}} \Cref{tab:carefully} shows that DP-LoRA, DP-LDM, and DP-Finetune are consistently poor-performing when fine-tuning all attention layers within the public models. As discussed in~\Cref{subsec:mot}, some attention layers in public models are highly sensitive yet contribute little to fine‑tuning; updating these layers can easily destabilize the model and cause it to collapse. Through careful manual selection of parameter matrices, \Cref{tab:carefully} shows that restricting fine‑tuning to the query matrices of the cross‑attention attention layers~\cite{labelembedding} consistently enhances the quality of the generated images. However, because the public model is highly complex, manually selecting the parameter matrices for fine‑tuning is labor‑intensive and consumes additional privacy budget~\cite{koskela2023practical}.

However, \toolname{} requires no manual selection and automatically avoids training collapse. For example, under $\epsilon=10.0$ and using `Stable-Diffusion-v1-5' as the public model, the FID is reduced by 83.1\%, and the accuracy increases by 31.5\% compared to DP-LoRA averaged across four sensitive datasets.

\begin{table*}[t]
\centering
\caption{Selective fine-tuning configuration across different selection ratios $c$. 
Columns grouped by U-Net blocks (Down/Mid/Up). 
Each cell shows finetuned projections for self-attention (attn1) and cross-attention (attn2) as \texttt{attn1/attn2} 
(e.g., \texttt{v/qkv} = finetune \textbf{v} in attn1 and \textit{q, k, v} in attn2). 
``--'' indicates module not selected. The public model is `Stable-Diffusion-v1-5'. `D$i$-A$j$' = Down block $i$, attention layer $j$;  `U$i$-A$j$' = Up block $i$, attention layer $j$.  `M-A0' = Middle block, attention layer 0.} 
\label{tab:selective_finetuning}
\resizebox{\textwidth}{!}{
\begin{tabular}{lcccccccccccccccc}
\toprule
\multirow{2}{*}{\textbf{Ratio}} & 
\multicolumn{6}{c}{\textbf{Down Blocks}} & 
\multicolumn{1}{c}{\textbf{Mid Block}} & 
\multicolumn{9}{c}{\textbf{Up Blocks}} \\
\cmidrule(lr){2-7} \cmidrule(lr){8-8} \cmidrule(l){9-17}
& D0-A0 & D0-A1 & D1-A0 & D1-A1 & D2-A0 & D2-A1 & M-A0 & U1-A0 & U1-A1 & U1-A2 & U2-A0 & U2-A1 & U2-A2 & U3-A0 & U3-A1 & U3-A2 \\
\midrule
5\% & \texttt{--/v} & \texttt{--/--} & \texttt{--/v} & \texttt{--/--} & \texttt{--/v} & \texttt{--/v} & \texttt{--/--} & \texttt{--/v} & \texttt{--/v} & \texttt{--/--} & \texttt{--/--} & \texttt{--/--} & \texttt{--/--} & \texttt{--/--} & \texttt{--/--} & \texttt{--/--} \\
10\% & \texttt{v/v} & \texttt{v/--} & \texttt{--/v} & \texttt{--/v} & \texttt{--/v} & \texttt{--/v} & \texttt{--/--} & \texttt{--/v} & \texttt{--/v} & \texttt{--/v} & \texttt{--/--} & \texttt{--/--} & \texttt{--/--} & \texttt{--/--} & \texttt{--/--} & \texttt{--/--} \\
20\% & \texttt{v/v} & \texttt{v/v} & \texttt{v/v} & \texttt{--/v} & \texttt{--/v} & \texttt{--/v} & \texttt{--/v} & \texttt{--/v} & \texttt{--/v} & \texttt{--/v} & \texttt{--/v} & \texttt{--/v} & \texttt{--/v} & \texttt{--/--} & \texttt{v/--} & \texttt{v/v} \\
30\% & \texttt{v/v} & \texttt{v/v} & \texttt{v/v} & \texttt{v/v} & \texttt{v/v} & \texttt{v/v} & \texttt{--/v} & \texttt{--/v} & \texttt{--/v} & \texttt{--/v} & \texttt{--/v} & \texttt{--/v} & \texttt{v/v} & \texttt{v/v} & \texttt{v/v} & \texttt{v/v} \\
40\% & \texttt{kv/v} & \texttt{kv/v} & \texttt{v/v} & \texttt{v/v} & \texttt{v/v} & \texttt{v/v} & \texttt{--/v} & \texttt{--/v} & \texttt{--/v} & \texttt{--/v} & \texttt{v/v} & \texttt{v/v} & \texttt{v/v} & \texttt{v/v} & \texttt{v/v} & \texttt{v/qv} \\
50\% & \texttt{kqv/kqv} & \texttt{kqv/v} & \texttt{v/v} & \texttt{v/v} & \texttt{v/v} & \texttt{v/kv} & \texttt{--/v} & \texttt{--/v} & \texttt{--/v} & \texttt{v/v} & \texttt{v/v} & \texttt{v/v} & \texttt{v/v} & \texttt{v/v} & \texttt{v/v} & \texttt{v/qv} \\
60\% & \texttt{kqv/kqv} & \texttt{kqv/qv} & \texttt{kqv/v} & \texttt{kv/kv} & \texttt{v/kv} & \texttt{v/kv} & \texttt{--/v} & \texttt{v/v} & \texttt{v/v} & \texttt{v/v} & \texttt{v/v} & \texttt{v/v} & \texttt{v/v} & \texttt{v/kqv} & \texttt{v/v} & \texttt{v/qv} \\
70\% & \texttt{kqv/kqv} & \texttt{kqv/kqv} & \texttt{kqv/qv} & \texttt{kqv/kv} & \texttt{v/kv} & \texttt{v/kv} & \texttt{v/kv} & \texttt{v/v} & \texttt{v/v} & \texttt{kv/v} & \texttt{v/kqv} & \texttt{v/v} & \texttt{v/kv} & \texttt{v/kqv} & \texttt{v/kqv} & \texttt{v/qv} \\
\bottomrule
\end{tabular}
}
\end{table*}

\subsection{Strengths of Parameter Selection (RQ2)}
\label{sec:rq2}

We explore the strengths of leveraging the parameter-selection mechanism introduced by \toolname{} to select critical parameter matrices for private finetuning. We compare the performance of \toolname{} with the following five invariants of \toolname{} under $\epsilon=10.0$. In this RQ, we use the `Stable-Diffusion-v1-5' as the public model.
\begin{itemize}[leftmargin=*]
    \item `Random' denotes randomly selecting the parameter matrices without the saliency-aware mechanism introduced by \toolname, to study the importance of the saliency-aware.
    \item `Noisy' indicates that the unselected parameters are replaced with random values, allowing us to analyze how much the original generative capability of the public model contributes to the final synthetic performance.
    \item `w/o LoRA' means that after the saliency-aware mechanism, we finetune the select parameter matrices directly using DP-SGD, without incorporating LoRA.
    \item `Layer-Level' means conducting layer-wise selection, instead of the matrix-wise selection, to validate our analysis in~\Cref{subsec:mot}.
    \item `All Parameter (Param.)' indicates that the saliency‑aware selection mechanism is applied to all fine‑tunable parameter matrices, rather than being limited to the attention layers.
\end{itemize}
\Cref{fig:rq2} presents the FID and Acc (\%) of \toolname{} and five variants. We summarize the takeaways of this RQ as follows.

\begin{table*}[!t]
\renewcommand{\arraystretch}{1.1}
\setlength{\tabcolsep}{8.5pt}
\small
    \centering
    \caption{FID and Acc (\%) of \toolname{} on four sensitive datasets with $\varepsilon=\{10,\infty\}$ using two public models `Stable Diffusion-v1-5' and `Realistic-v6'. `$\epsilon = \infty$' denotes public model fine-tuning without DP protection.}
    \label{tab:no_dp}
    \resizebox{0.99\textwidth}{!}{
    \begin{tabular}{l|cc|cc|cc|cc|cc|cc|cc|cc}
    \toprule
    \multirow{3}{*}{\textbf{Method}} &  \multicolumn{8}{c|}{\textbf{Stable Diffusion-v1-5}} & \multicolumn{8}{c}{\textbf{Realistic-v6}}  \\
    \cline{2-17}
    & \multicolumn{2}{c|}{{\tt CIFAR-10}} & \multicolumn{2}{c|}{{\tt OCTMNIST}}  &  \multicolumn{2}{c|}{{\tt CelebA}} & \multicolumn{2}{c|}{{\tt Camelyon}} & \multicolumn{2}{c|}{{\tt CIFAR-10}} & \multicolumn{2}{c|}{{\tt OCTMNIST}}  &  \multicolumn{2}{c|}{{\tt CelebA}} & \multicolumn{2}{c}{{\tt Camelyon}} \\
    \Xcline{2-17}{0.5pt}
     & \centering \textbf{FID} & \textbf{Acc} & \textbf{FID} & \textbf{Acc} & \textbf{FID} & \textbf{Acc} & \textbf{FID} & \textbf{Acc} & \centering \textbf{FID} & \textbf{Acc} & \textbf{FID} & \textbf{Acc} & \textbf{FID} & \textbf{Acc} & \textbf{FID} & \textbf{Acc} \\
    \midrule
    \toolname{} $(\epsilon=10)$  & 26.6 & 74.6 & 77.9 & 46.2  & 23.6 & 90.2  & 145.9 & 79.9 &  30.8 & 68.0 & 81.7 & 38.5 & 19.2 & 92.1 & 118.6 & 71.7 \\
    \toolname{} $(\epsilon=\infty)$ & 21.4 & 76.9 & 71.8 & 50.3 & 19.2 & 93.4 & 118.0  & 81.2 & 29.8 & 70.7 &  73.3 & 42.2 & 16.0 & 95.0 & 139.3 & 72.4 \\
    \bottomrule
\end{tabular}
}
\end{table*}

\vspace{1mm}
\noindent \textbf{Matrix-wise saliency-aware selection benefits the synthetic performance.} In~\Cref{fig:rq2}, we first observe that randomly selecting the parameter matrices leads to noticeably worse synthetic images compared to saliency‑aware selection. In {\tt CIFAR‑10}, the FID and Acc are 146.3 and 16.2, which are far worse than the 26.6 and 74.6 achieved by \toolname. Random selection is unlikely to identify parameters that meaningfully influence the model’s generative behavior, causing the DP‑SGD noise to be injected into uninformative or even irrelevant parts of the model, ultimately harming both fidelity and utility. Additionally, we observe that matrix‑wise selection outperforms layer‑wise selection, which validates our analysis in~\Cref{subsec:mot} that different parameter matrices contribute unequally even within the same layer. Matrix‑wise selection enables \toolname{} to isolate and finetune the most influential matrices, reducing unnecessary noise injection into less relevant components. On {\tt CelebA}, the FID and Acc for layer‑wise selection are 121.6 and 64.0, which are greatly worse than the 23.6 and 90.2 achieved by matrix‑wise selection.

\vspace{1mm}
\noindent \textbf{LoRA fine‑tuning with DP‑SGD outperforms directly applying DP‑SGD to the selected parameter matrix set.} Previous studies~\cite{dplora,li2025lorada} show that applying LoRA finetuning with DP-SGD reduces the injected noise compared to standard DP-SGD training. Our experiments shown in~\Cref{fig:rq2} empirically validate this claim. On the {\tt CIFAR-10} and {\tt CelebA} sensitive image datasets, our method produces superior image quality compared to applying DP-SGD directly to the selected parameter matrices during fine-tuning.

\subsection{Hyperparameter Analysis (RQ3)}
\label{sec:rq3}

This RQ explores how different hyperparameter settings affect the system, considering the following two perspectives.

\begin{itemize}[leftmargin=*]
    \item \textbf{Privacy allocations.} We evaluate the performance of \toolname{} under different parameter selection ratios $c= \{5\%, 10\%, 20\%, 30\%, 40\%, 50\%, 60\%, 70\%\}$, under the privacy budget $\epsilon=10.0$. The selection ratio determines the ratio of matrices with the largest noisy gradient norms that are selected for subsequent private fine-tuning from the parameter matrices pool.
    \item \textbf{Privacy budget.} We evaluate the utility and fidelity of the synthetic images under $\epsilon \in \{0.2, 1.0, 5.0, 10.0, 15.0, 20.0\}$. To ensure a consistent privacy allocation ratio across settings, we adjust the noise scale $\sigma_s$ in the parameter‑selection mechanism so that it matches the allocation used in the $(\epsilon = 5,\ \sigma_s = 10)$ configuration reported in~\Cref{tab:ratio}.
    \item \gc{\textbf{Noise scale $\sigma_s$.} As introduced in~\Cref{subsec:privacy_analysis}, DP‑SAPF doesn’t tune privacy‑budget allocation hyperparameters. The user‑specified parameter is the noise scale $\sigma_s$  for parameter selection. The privacy allocation is automatically determined via RDP based on the dataset size, the target $\varepsilon$, and $\sigma_s$. We evaluate $\sigma_s = \{5.0,10.0,20.0,25.0\}$ under privacy budget $\epsilon=\{1.0,10.0\}$ for {\tt CelebA}. \Cref{tab:ratio} presents the DP‑cost ratios (\%) between the parameter‑selection queries and the DP‑SGD fine‑tuning stage in \toolname{}.}
\end{itemize}

\noindent All experiments in this research question are conducted using the public model Stable-Diffusion-v1-5. \Cref{fig:privacy_budget} reports the Acc (\%) and FID scores of synthetic images generated by \toolname{} under different privacy budgets and selection ratios.
\Cref{tab:selective_finetuning} lists the specific parameter matrices selected from the candidate pool for each selection ratio. %
We summarize the key findings from this RQ as follows.

\vspace{1mm}
\noindent {\textbf{The synthetic performance is relatively insensitive to the selection ratio $c$ within the range of $[20\%, 50\%]$.}} \Cref{fig:privacy_budget} shows that when the $c$ lies within the range of $[20\%, 50\%]$, both FID and Acc remain nearly unchanged, indicating that our method is largely insensitive to this hyperparameter. Consequently, tuning the selection ratio in practice is straightforward. As the increase in $c$, \toolname{} may select some sensitive parameter matrices, and finetune them, which easily collapses the public model. For example, in {\tt CelebA}, \Cref{fig:privacy_budget} shows a sharp drop in Acc and an increase in FID when the select ratio increases from $60\%$ to $70\%$. \gc{\Cref{apsec:additional_analysis} examine the sensitivity of the $c$ on more datasets and public models. }

\vspace{1mm}
\noindent {\textbf{An $\epsilon \geq 10$ provides a practical balance between synthetic performance and privacy preservation.}} In~\Cref{fig:privacy_budget}, we observe that increasing the privacy budget $\epsilon$ consistently improves synthetic image quality, as a larger $\epsilon$ corresponds to injecting less DP noise~\cite{gong2025dpimagebench,dp}. On both {\tt CIFAR‑10} and {\tt CelebA}, performance plateaus when $\epsilon \geq 10$, suggesting that $\epsilon = 10$ provides a practical balance between synthetic performance and privacy preservation. For {\tt CIFAR‑10}, increasing $\epsilon$ from 10 to 20 only reduces the FID from 26.6 to 25.5 (a marginal change of 1.1), while the Acc remains nearly unchanged.

\begin{table}[!t]
\small
    \centering
    \caption{\gc{FID and Accuracy (\%) of synthetic images under different noise scale $\sigma_s=\{5.0,10.0,20.0,25.0\}$.} }
    \label{tab:ratio_performance}
    \setlength{\tabcolsep}{2.3mm}{
    \resizebox{0.48\textwidth}{!}{
    \renewcommand{\arraystretch}{1.2}
    \begin{tabular}{l|cc|cc|cc|cc}
    \toprule
    \multirow{2}{*}{\texttt{CelebA}}  & \multicolumn{2}{c|}{$\sigma_s = 5$} & \multicolumn{2}{c|}{$\sigma_s = 10$}  & \multicolumn{2}{c|}{$\sigma_s = 20$}   & \multicolumn{2}{c}{$\sigma_s = 25$} \\
    \Xcline{2-9}{0.5pt}
    & \centering \textbf{FID} & \textbf{Acc} & \textbf{FID} & \textbf{Acc} & \textbf{FID} & \textbf{Acc} & \textbf{FID} & \textbf{Acc} \\
    \midrule
    $\epsilon = 1$ & 25.0 & 84.2 & 25.9  & 83.5&  25.8 & 84.8  & 25.8  & 83.8\\
    $\epsilon = 10$ &  23.6 & 90.2 & 23.8 & 91.5 & 23.7  & 90.1  & 23.5  & 91.1\\
    \bottomrule
\end{tabular}
}}
\end{table}

\vspace{1mm}
\noindent \textbf{Saliency-aware selection reveals structured sparsity in DP LoRA.} As present in~\Cref{tab:selective_finetuning}, cross-attention (\texttt{attn2}) and value projection ($W_v$) dominate early selection: at $c=5\%$, only the cross-attention is activated, while self-attention is frozen. $W_k$ and $W_q$ remain frozen until $c \geq 40\%$, with $W_v$ being the sole projection consistently selected. This highlights the critical role of cross-attention and value projection in DP-LoRA, as their gradients carry the strongest learning signals, especially under privacy constraints. Besides, the projection importance is layer-dependent. For example, shallow blocks (D0) require holistic tuning (\texttt{kqv/kqv} at $50\%$ and \texttt{kv/v} at 40\%), while deeper blocks (Mid and Up) benefit from sparse $W_v$-only updates. $W_q$ and $W_k$ are consistently excluded from the middle block when $c \leq 60\%$.  We also observe that at lower selection ratios, \toolname{} tends to prioritize parameter matrices proximal to the input and output layers of the model.

Notably, \toolname{} gets unstable when $c=70\%$ (as shown in \Cref{fig:privacy_budget}). This shows that the instability of DP LoRA (as presented in~\Cref{tab:moti}) is not caused by the whole middle block. Its value projection can also be finetuned stably.

\vspace{1mm}
\noindent \gc{\textbf{The synthetic performance is insensitive to the noise scale $\sigma_s$.} As shown in \Cref{tab:ratio_performance}, increasing $\sigma_s$ from $5$ to $25$ has little impact on {\tt CelebA}. Under $\epsilon=1$, the FID remains within $25.0$ to $25.9$, and the Acc stays within $83.5\%$ to $84.8\%$. Under $\epsilon=10$, the FID remains within $23.5$ to $23.8$, and the Acc stays within $90.1\%$ to $91.5\%$. The selection stage is a one-shot Gaussian query, while the final synthetic performance is mainly determined by the subsequent DP-SGD fine-tuning. As $\sigma_s$ increases from $5$ to $25$, the selection-stage cost drops from $12.92\%$ to $0.42\%$ under $\epsilon=1$, and from $0.17\%$ to $0.01\%$ under $\epsilon=10$. Hence, a larger $\sigma_s$ makes selection noisier but leaves more privacy budget for DP-SGD. Since the top-$c$ high-saliency matrices have relatively strong gradient signals, this extra selection noise mainly affects matrices near the selection boundary and does not substantially change the selected subspace. $\sigma_s=5$ is a practical default. Thus, \Cref{tab:hyperparameter} show that $\sigma_s = 5$ is used across all cases. }

\begin{table}[!t]
\small
    \centering
    \caption{RDP cost ratios (\%) of parameter-selection query /  DP-SGD in \toolname{} under different noise scale $\sigma_s$ for {\tt CelebA}, under the $\epsilon=\{1.0,5.0,10.0\}$. }
    \label{tab:ratio}
    \setlength{\tabcolsep}{1.4mm}{
    \resizebox{0.48\textwidth}{!}{
    \renewcommand{\arraystretch}{1.2}
    \begin{tabular}{l|cccc}
    \toprule
    \textbf{Privacy Budget}   & $\sigma_s = 5$ & $\sigma_s = 10$  & $\sigma_s = 20$   & $\sigma_s = 25$ \\
    \midrule
    $\epsilon = 1$ & 12.92 / 87.08 & 2.71 / 97.29 &   0.65 / 99.35 & 0.42 / 99.58\\
    $\epsilon = 5$ &  0.57 / 99.43 & 0.14 / 99.86 &  0.03 / 99.97 & 0.02 / 99.98\\
    $\epsilon = 10$ &  0.17 / 99.83 & 0.04 / 99.96 &  0.02 / 99.98 & 0.01 / 99.99\\
    \bottomrule
\end{tabular}
}}
\end{table}

\section{Discussions}
\label{sec:dis}

This section discusses (1) the impact of DP on the \toolname, (2) \gc{the transferability of \toolname,} (3) the effectiveness of fine-tuning on sensitive datasets, (4) a comparative analysis of computational overhead relative to baselines, and (5) the limitations of our methods.

\begin{table}[!t]
\renewcommand{\arraystretch}{1.1}
\setlength{\tabcolsep}{6.5pt}
\small
    \centering
    \caption{FID and Accuracy (\%) of synthetic images generated by public models, with and without fine-tuning on sensitive images under the privacy budget $\epsilon=10.0$.}
    \label{tab:no_finetuning}
    \resizebox{0.49\textwidth}{!}{
    \begin{tabular}{l|cc|cc|cc|cc}
    \toprule
    \multirow{2}{*}{\textbf{Public Model}} & \multicolumn{2}{c|}{{\tt CIFAR-10}} & \multicolumn{2}{c|}{{\tt OCTMNIST}}  &  \multicolumn{2}{c|}{{\tt CelebA}} & \multicolumn{2}{c}{{\tt Camelyon}} \\
    \Xcline{2-9}{0.5pt}
     & \centering \textbf{FID} & \textbf{Acc} & \textbf{FID} & \textbf{Acc} & \textbf{FID} & \textbf{Acc} & \textbf{FID} & \textbf{Acc} \\
    \hline
    \rowcolor[RGB]{234, 234, 234} \multicolumn{9}{c}{\textit{No Fine-tuning on Sensitive Images}} \\
    \hline
    SD-v1-5 & 38.2 & 62.6 & 239.3 & 25.0 & 110.0 & 89.6 & 406.5 & 61.1 \\
    SD-2-1-base   & 43.1 & 58.1 & 263.1 & 25.0 & 132.4 & 88.6 & 416.8 & 67.2 \\
    Realistic-v6 & 48.3 & 38.5 &  313.5 & 25.0 & 88.8 & 71.0 & 393.5 & 62.4 \\
    Prompt2med  & 47.8 & 51.2 &  128.1 & 25.0 & 139.6 & 87.7 & 420.1 & 62.5 \\
    \hline
    \rowcolor[RGB]{234, 234, 234} \multicolumn{9}{c}{\textit{Fine-tuning on Sensitive Images (\toolname)}} \\
    \hline
    SD-v1-5 & 26.6 & 74.6 & 77.9 & 46.2  & 23.6 & 90.2  & 145.9 & 79.9 \\
    SD-2-1-base   & 27.2 & 72.3 & 80.0 & 42.9 & 24.0 & 90.5 & 55.3 & 78.2  \\
    Realistic-v6 & 30.8 & 68.0 & 81.7 & 38.5 & 19.2 & 92.1 & 118.6 & 71.7 \\
    Prompt2med  & 25.4 & 72.8 & 99.1 & 43.4 & 26.0 & 88.5 & 89.4 & 80.1 \\
    \bottomrule
\end{tabular}
}
\end{table}
\vspace{-2mm}

\subsection{\toolname{} in the Non-Private Setting}
\label{subsec:non-DP}

This experiment evaluates the impact of DP on the generative performance of \toolname{}. We compare our approach against a non-DP baseline ($\epsilon=\infty$), in which models are trained using the \toolname{} without the injection of Gaussian noise.

\Cref{tab:no_dp} shows that, under a privacy budget of $\epsilon=10$, when using `Stable Diffusion-v1-5' as the public model, \toolname{} achieves average Acc reductions of only 2.3\% $(=(76.9-74.6)\times 100\%)$, 4.1\% $(=(50.3-46.2)\times 100\%)$, 3.2\% $(=(93.4-90.2)\times 100\%)$, and 1.3\% $(=(81.2-79.9)\times 100\%)$ across four sensitive datasets compared to the non-DP setting ($\epsilon=\infty$). However, we observe that for the {\tt OCTMNIST} and {\tt Camelyon} datasets, non‑DP image synthesis still performs poorly. This is primarily due to the severe distribution mismatch between the public models and these sensitive datasets. As shown in~\Cref{tab:no_finetuning}. Without fine‑tuning on the sensitive images, the synthetic images generated directly from public models exhibit poor FID and Acc. These results indicate the need for further improvements to \toolname{}. 

\subsection{Synthesizing Images without Fine-Tuning}
\label{subsec:non-finetuning}

\Cref{tab:no_finetuning} compares the synthesis quality of four public diffusion models before and after DP fine-tuning via \toolname{} on sensitive data under $\epsilon=10.0$ across four datasets. For every model–dataset pair, fine-tuning yields lower FID and higher accuracy compared to the pretrained models. \toolname{} consistently improves synthesis quality across four public models, validating its robustness to diverse pretraining priors. 

The domain bias of public models influences the absolute performance after fine-tuning. `Realistic-v6,' pretrained on high-fidelity human faces, maintains superior {\tt CelebA} synthesis (19.2 with FID) after adaptation, inheriting its strong facial prior. In contrast, despite being a medical-domain model with the best FID before fine-tuning, `Prompt2med' shows relatively modest gains on {\tt OCTMNIST} (99.1 with FID) compared to other public models, suggesting its pretraining distribution does not fully align with retinal OCT images. These results also suggest that a public model with strong downstream performance (FID or Acc) does not necessarily have greater benefits after fine-tuning. Selecting the most suitable public model remains an open question for \toolname.

\subsection{Computational Resources}

\begin{table}[!t]
\setlength{\tabcolsep}{3.0pt}
\small
\centering
\caption{GPU memory usage and runtime analysis for the {\tt CIFAR-10} using `Stable-Diffusion-v1-5' as the public model. ‘Memory’ means the GPU memory usage, and ‘Peak Memory’ means the peak GPU memory usage across all stages.}
\setlength{\tabcolsep}{3.0mm}{
\resizebox{0.47\textwidth}{!}{
\begin{tabular}{l|l|rrc}
\toprule
\textbf{Algorithm} & \textbf{Stage} & \textbf{Memory} & \textbf{Runtime} & \textbf{Peak Memory} \\
\midrule
\multirow{3}{*}{PE}     & Selecting & 0GB & 0h & \\
                        & Finetune & 0GB & 0h & 31.3GB\\
                         & Synthesis & 31.2GB & 43.5h & \\
\hline
\multirow{3}{*}{Aug-PE}   & Selecting & 0GB & 0h & \\
                        & Finetune & 0GB & 0h & 31.2GB \\
                         & Synthesis & 31.2GB & 43.5h & \\
\hline
\multirow{3}{*}{DP-LDM}    & Selecting & 0GB & 0h & \\
                            & Finetune & 33.3GB & 4.9h & 33.3GB\\
                           & Synthesis & 17.1GB & 1.9h & \\
\hline
\multirow{3}{*}{DP-LoRA}    & Selecting & 0GB & 0h & \\
                            & Finetune & 29.0GB & 4.7h & 29.0GB\\
                           & Synthesis & 17.1GB & 1.9h & \\
\hline
\multirow{3}{*}{DP-Finetune} & Selecting & 0GB & 0h & \\ 
                            & Finetune & 41.0GB & 8.0h & 41.0GB\\
                           & Synthesis & 17.1GB & 1.9h & \\
\hline
\multirow{3}{*}{\toolname} & Selecting & 4.7GB & 0.6h & \\ 
                           & Finetune & 25.9GB & 3.5h & 25.9GB\\
                           & Synthesis & 17.1GB & 1.9h & \\
\bottomrule
\end{tabular}}}
\label{tab:computationalResource}
\end{table}

This section investigates the computational resource usage of various methods. \Cref{tab:computationalResource} presents the GPU memory and runtime usage of baselines and \toolname{} for the {\tt CIFAR-10} using `Stable-Diffusion-v1-5' as the public model. 

In this table, we observe that fine-tune-free methods, such as PE~\cite{dpsda} and Aug-PE~\cite{xiedifferentially}, require substantial image synthesis and variation using a public model. When the public model-generated image is high resolution (like $256\times 256$ resolution for {\tt CelebA}), the synthesizing process is time-intensive. Thus, the training efficiency is lower than that of fine-tune-based methods. Compared with DP‑LoRA, \toolname{} updates fewer parameter matrices, which in turn reduces both training time and GPU memory consumption. Specifically, \toolname{} saves about 11.0\% $(=(29.0-25.9)/29.0 \times 100\%)$ GPU memory requirements, and $9.1\%(=(1.9+4.7-(1.9+3.5+0.6))/(1.9+4.7) \times 100\%)$ running time, compared to DP-LoRA~\cite{dplora}.

\subsection{\gc{Transferability of DP-SAPF}} 
\gc{This section evaluates whether \toolname{} can transfer to other NN structures and DP mechanisms. We evaluate \toolname{} on Diffusion-Transformer (DiT)~\cite{dit} and alternative DP mechanisms, Exponential Mechanism (EM)~\cite{dpbook} and Propose Test Release (PTR)~\cite{ptr}. 
\Cref{fig:dit} shows that \toolname{} transfers well to both alternative model architectures and alternative DP mechanisms. For DiT on {\tt CelebA}, \toolname{} improves accuracy from $61.2\%$ to $78.4\%$ and reduces FID from $305.6$ to $61.1$ compared with DP-LoRA, indicating that saliency-aware selection is not limited to U-Net-based diffusion models. For alternative DP mechanisms on {\tt CelebA}, EM and \toolname{} achieve similar accuracy of around $90\%$ and an FID of around 24.0, while outperforming the synthetic performance of PTR. }
\gc{Our choice is motivated by practical alignment with DP‑SGD. DP‑SAPF already requires per‑sample gradients for subsequent DP fine‑tuning. Leveraging the same clipped gradients and Gaussian noise enables implementation with clear sensitivity and RDP accounting, without introducing an additional DP mechanism or hyperparameters.} 

\subsection{Limitations}
\label{apsubsec:lim}

As analyzed in~\Cref{subsec:non-DP} and~\Cref{subsec:non-finetuning}, the domain bias between public models and sensitive images influences the absolute performance after fine-tuning. When the public model has a substantial domain mismatch with the sensitive data, fine‑tuning alone is insufficient to produce high‑quality synthetic images. Selecting the most suitable public model and addressing domain-mismatch challenges remain open questions for \toolname. Besides, the parameter matrix candidate pool in \toolname{} focuses on the attention layers. Indeed, the design of the matrix candidate pool plays a key role in the effectiveness of \toolname{}, and exploring how to construct it optimally remains a promising direction for future work. 

\gc{Although gradient magnitude is not a perfect saliency metric for parameter importance~\cite{DBLP:conf/iclr/AnconaCO018}, many widely used non-DP methods adopt it and achieve strong empirical performance despite these theoretical limitations~\cite{DBLP:conf/iclr/LeeAT19,DBLP:conf/iclr/WangZG20}.} 
\gc{Alternative selection metrics are possible. We do not adopt them in \toolname{} as they are typically iterative and more complex under DP. As shown in \Cref{apsubsec:alternative_saliency}, DP-SAPF achieves comparable or better performance while being simpler and more efficient. The primary contribution of \toolname{} lies in identifying training collapse when using public models under DP and proposing an effective and practical mitigation strategy. 
We leave the exploration of more suitable proxies to future work.}

\begin{figure}[!t]
    \centering
    \includegraphics[width=1.0\linewidth]{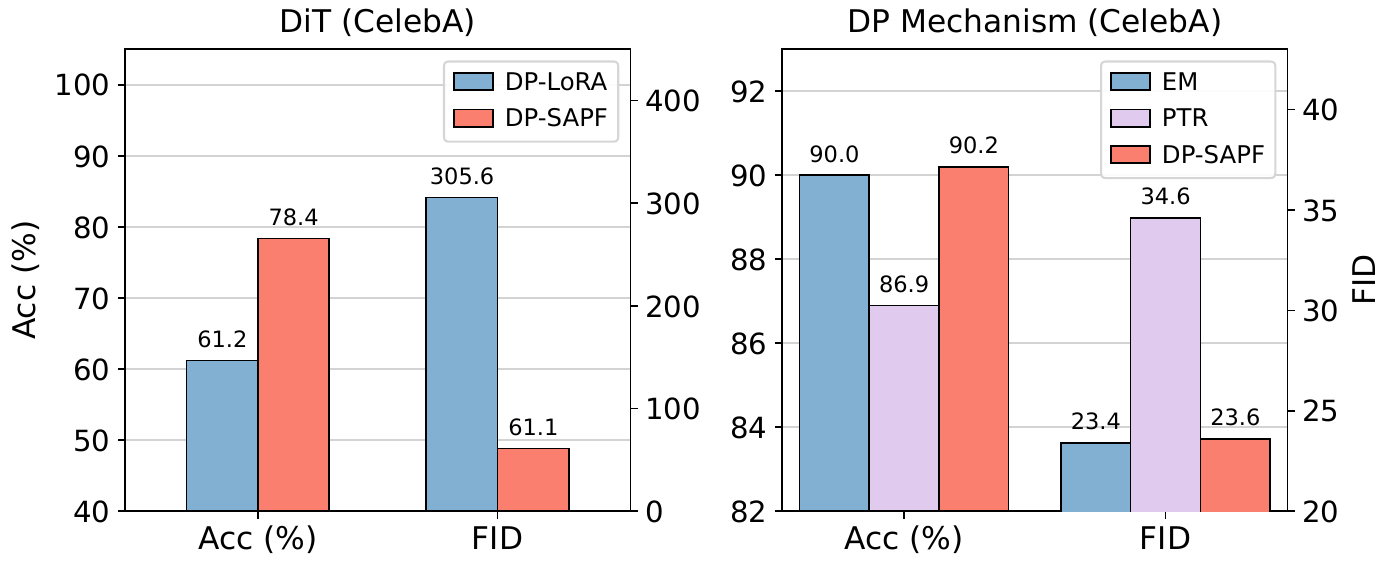}
    \caption{\gc{The synthetic performance of \toolname{} when using DiT, and alternative DP mechanisms, EM, and PTR.}}
    \label{fig:dit}
\end{figure}

\section{Related Work}
\label{sec:related}

\subsection{DP Image Synthesis}
DP image synthesis has seen various advancements to protect sensitive visual data while generating realistic synthetic images~\cite{lora,li2023privimage,dp-merf,dpdm,dp-diffusion,dp-kernel,pearl,feta-pro}. Two main types of methods include (1) using public resources like public datasets or models (like APIs) to improve the synthetic performance~\cite{xiedifferentially,li2023privimage,dp-diffusion,dpldm,dplora}; (2) while the public resource is unsuitable, another major paradigm involves training synthesizers solely on sensitive image datasets, without public resources~\cite{dp-merf,dp-feta,pearl,dp-ntk,dp-mepf,dp-kernel,dpdm}.

\vspace{1mm}
\noindent \textbf{Leveraging public dataset.} A predominant framework involves using a publicly available dataset for pretraining, followed by fine-tuning the model with sensitive images using DP-SGD~\cite{dp-mepf,dp-diffusion,dplora,dpldm,li2023privimage}. This approach leverages public data to provide a strong prior for the synthesizer. Notable methods include PDP-Diffusion~\cite{dp-diffusion}, which pretrains and finetunes using diffusion models, DP-LDM~\cite{dpldm}, using latent diffusion models, and DP-LoRA~\cite{dplora}, integrating LoRA~\cite{lora} for parameter-efficient fine-tuning.

\vspace{1mm}
\noindent \textbf{Leveraging public model.} Various methods explore using public models to reduce computational intensity, bypassing the need for the extensive computational cost of fine-tuning. The public models are either cloud-based services (e.g., Dall-E~\cite{dalle2}) or local software libraries (e.g., Stable Diffusion~\cite{labelembedding}). PE~\cite{dpsda} is a typical method that iteratively directs APIs to generate synthetic images aligning with sensitive data in the feature space. 
However, PE's performance can underperform fine-tuning-based methods when the synthetic data distribution diverges from private data. To mitigate this, SIM-PE~\cite{pe3} adjusts released public simulators (e.g., Google Font~\cite{simulator}) to fit private data under DP, offering an alternative to APIs.

Another type of DP image synthesis method using public models is fine-tuning based method~\cite{dplora,dpldm}. DP-LDM~\cite{dpldm} proposes fine-tuning the public models on the sensitive image datasets for high-quality image synthesis. DP-LoRA~\cite{dplora} then introduces the use of LoRA to finetune public models, improving training efficiency. 

\vspace{1mm}
\noindent \textbf{Selecting Public Datasets for DP Dataset Synthesis.} Previous works propose selecting public datasets from available resources to enhance synthetic data generation~\cite{li2023privimage,dpldm,yuselective}. Li et al.~\cite{li2023privimage} and Yu et al.~\cite{yuselective} propose the selection of a subset of public datasets with distributions aligned to sensitive data for pretraining. 
SIM-PE~\cite{pe3} proposes a training-free method, which selects a part of images from a large public image dataset to match the distribution of sensitive images. 

Our paper focuses on leveraging public models and proposes \toolname{} to address the problem of selecting suitable parameter matrices for private fine-tuning.

\subsection{\gc{Parameter Selection for Non-DP Synthesis}}

\gc{Parameter selection has been studied in non-DP image synthesis. A common line of work reduces the number of trainable parameters by updating only selected modules, such as attention layers or low-rank adapters, instead of fine-tuning the full model~\cite{labelembedding,peft,nondpfinetuning}. These methods are mainly designed to improve training efficiency, reduce memory cost, and preserve the generative prior of pretrained models. Other pruning-based methods use gradient-related signals to estimate parameter importance before or during training~\cite{zhang2024gradient,wang2020picking,he2023sensitivity}. 

However, parameter selection in the non-DP differs from the problem studied in DP. Non-DP methods can freely inspect the training data and choose parameters without accounting for privacy loss. In DP image synthesis, the sensitive images used for parameter selection must also be protected~\cite{dpbook}. Low-signal parameters, i.e., those with small gradients, can be easily overwhelmed by injected DP noise. Consequently, training collapse during DP fine-tuning of large public models is a DP-specific, noise-induced failure mode that is much less pronounced in non-DP training~\cite{labelembedding,peft,nondpfinetuning,zhang2024gradient,wang2020picking,he2023sensitivity}.

DP-SAPF is motivated by these differences and focuses on alleviating the DP-specific training collapse. DP-SAPF performs data-dependent matrix-level parameter selection using clipped and noised gradients, and then composes this selection privacy cost with the DP fine-tuning. In this way, DP-SAPF keeps the efficiency benefit of parameter selection while making the selection stage satisfy DP. Thus, DP-SAPF is not a straightforward adaptation of prior non-DP methods.}

\section{Conclusions}
Currently, various works propose leveraging the generative capabilities of public models to improve DP synthetic images. Existing methods commonly use LoRA to reduce the number of trainable parameters. However, we argue that applying LoRA exhaustively across all attention layers of the public model, as done in the current \textit{state‑of‑the‑art} approach, is suboptimal in a DP setting, as it introduces substantial noise accumulation and harms training stability.

This paper introduces \toolname{}, a saliency‑aware method that selects a small set of critical parameter matrices from the attention layer for LoRA fine‑tuning under DP. To identify these parameters, we first feed sensitive images into the public model, compute the corresponding gradients, and inject noise into them to satisfy DP. \toolname{} then selects the most salient parameters, those with the highest noisy gradient magnitudes, for DP fine‑tuning. Extensive experiments on four sensitive image datasets show that \toolname{} improves downstream classification accuracy of synthetic data, reduces FID, and saves computational cost, compared with finetuning approaches that update all attention parameter matrices. Even worse, fine‑tuning all attention parameter matrices is highly prone to collapse during private training. \toolname{} represents the first systematic approach to parameter selection for DP image synthesis leveraging public models, providing a foundation for future research in DP image synthesis. 

\subsection*{Ethical Considerations} 

This section examines how each stakeholder is affected during two stages: the research process (data handling) and the dissemination of results (deployment). We then outline mitigation strategies and justify the need for conducting this work.

\vspace{1mm}
\noindent \textbf{Stakeholders and Process Impact.}
\toolname{} involves three stakeholder groups.
(1) \textit{Data Subjects}: individuals represented in sensitive datasets (e.g., faces, medical scans) who rely on strong confidentiality protections during data processing.
(2) \textit{Data Owners}: institutions such as hospitals that manage sensitive images and must navigate regulatory and legal constraints during research collaborations.
(3) \textit{Researchers and Practitioners}: the ML community who depend on reproducible, methodologically reliable DP techniques.

\vspace{1mm}
\noindent \textbf{Impact of the Research.}
Releasing \toolname{} produces both benefits and risks for these groups.

\vspace{1mm}
\noindent \textit{Positive Impacts.}
(1) \textit{Supporting data sharing} (Data Owners \& Researchers): \toolname{} improves the utility of DP-generated images, enabling institutions to share information while preserving privacy, thereby expanding researchers’ access to data.
(2) \textit{Advancing transparency} (Practitioners): open-sourcing our implementation facilitates auditing and replication.

\vspace{1mm}
\noindent \textit{Negative Impacts.}
(1) \textit{Bias amplification} (Data Subjects): DP-generated data may still reflect or intensify biases present in the original datasets, potentially harming downstream individuals.
(2) \textit{Misuse risks} (Data Subjects \& Society): high-fidelity generation methods can be exploited for identity forgery, misinformation, or surveillance applications.

\vspace{1mm}
\noindent \textbf{Mitigation.}
We implement several safeguards and recommend additional measures for real-world use.

\vspace{1mm}
\noindent \textit{Methodological Mitigations.}
(1) \textit{Validation checks} (Practitioners): we apply strict verification procedures to avoid erroneous or misleading synthetic outputs.
(2) \textit{Transparency} (Data Owners \& Researchers): our released codebase enables institutions to audit privacy claims before deployment.

\vspace{1mm}
\noindent \textit{Recommended Deployment Practices.}
(1) \textit{Fairness auditing} (Data Subjects): future deployments should integrate fairness evaluations~\cite{barocas2023fairness}.
(2) \textit{Misuse prevention} (Society): controlled access and usage monitoring~\cite{tang2024modelguard}, such as restricting model checkpoints, rate-limiting, or requiring authenticated usage, can reduce the risk of malicious exploitation.

\vspace{1mm}
\noindent \textbf{Justification.}
We argue that the benefits of \toolname{} outweigh the risks given the safeguards in place. \toolname{} addresses a central challenge in privacy-preserving generative modeling: balancing utility and privacy without leveraging public models. By releasing our method and implementation, we promote transparency, facilitate community oversight, and help identify risks in DP image synthesis.

\subsection*{Open Science}

We release the replication package on the Github link.\footnote{\label{link}\url{https://github.com/2019ChenGong/DP-SAPF}} Besides, the  DOI for the artifacts is on Zenodo.\footnote{\url{https://doi.org/10.5281/zenodo.20287797}}

\bibliographystyle{ieeetr}
\bibliography{bib}

\appendix

\setcounter{section}{0}
\setcounter{equation}{0}
\renewcommand\thesection{\Alph{section}}

\section{Details of R\'{e}nyi DP in \toolname}
\label{app:supp_dp}

We use the Rényi DP (RDP) paradigm \cite{sgm} to account for the cumulative privacy costs, for fairness comparison with previous DP image synthesis methods~\cite{dp-feta,gong2025dpimagebench,dplora,dpsda}. RDP provides a rigorous accounting schedule for tracking privacy loss across multiple operations.

\begin{definition}[Sub-sampled Gaussian Mechanism (SGM~\cite{sgm})]
\label{def:sgm}
    Let $f:{D^\text{sub}} \subseteq D \to {\mathbb{R}^d}$ be a function with sensitivity ${\Delta _f} = \max_{D\simeq D'}{\left\| {f\left(D \right) - f\left({D'} \right)} \right\|_2}$.  Parameterized with a sampling rate $q \in \left( {0,1} \right]$ and noise standard deviation $\sigma>0$, the SGM $\mathcal{Q}$ is defined as,
\[
    \mathcal{Q}_{f,q,\sigma }\left( D \right) \buildrel \Delta \over = f\left( S \right) + \mathcal{N} \left( {0,{\sigma ^2}\Delta_f^2{\mathbb{I}}} \right)
    \nonumber\]
\end{definition}

\noindent where $S =$ \{${x\left| x \in D \right.}$ selected independently with probability $q$\} and $f\left( \varnothing  \right) = 0$. The privacy loss of SGM can be tracked through R\'{e}nyi DP~\cite{sgm}~(\Cref{def:rdp}). RDP can quantify the privacy loss of SGM accurately, as introduced in~\Cref{eq:rdp_gamma}. 

\begin{definition}[\textit{R\'{e}nyi DP}~\cite{sgm}]
\label{def:rdp}
The R\'{e}nyi divergence between two probability distributions $M$ and $N$ is, $D_\alpha(M \| N) = \frac{1}{\alpha - 1} \ln \mathbb{E}_{x \sim N} \left[ \left( \frac{M(x)}{N(x)} \right)^\alpha \right],$
where $\alpha > 1$ and $\alpha \in \mathbb{R}$. A randomized mechanism satisfies $(\alpha, \gamma)$-RDP if, for any neighboring datasets $D$, $D'$, and algorithm $\mathcal{Q}$, it holds that, $D_\alpha(\mathcal{Q}(D) \| \mathcal{Q}(D')) \leq \gamma$.
\end{definition}

In \toolname, prior to fine-tuning, we perform a gradient-based screening to select a subset of parameters from the public model. For each image $x_i \in D_s$, we compute the per-sample gradient $g_i$, clip it by $C_s$, and inject Gaussian noise as shown in~\Cref{subsec:sapf}. Based on the sensitivity analysis in~\Cref{subsec:sapf}, the $\ell_2$-sensitivity of the average operator is $\Delta_s = C_s / N^\ast$, where $N^\ast$ means the expected size of the sensitive datasets. Consequently, this selection phase satisfies $(\alpha, \gamma_s)$-RDP, where, $
    \gamma_s =  \frac{\alpha}{2 \sigma_s^2},$ as shown in~\Cref{the:select}.
    
Here, $\sigma_s$ is the noise multiplier specifically allocated for the parameter selection. The resulting privacy cost $\gamma_s$ remains independent of both the dataset size $n$ and the model dimensionality, providing a robust bound for the selection stage.

Then, the selected parameters are optimized using DP-SGD. For each training iteration with sampling ratio $q$, the privacy loss $\gamma_{d,i}$ is characterized by the SGM, defined in~\Cref{def:sgm}. Note that $C_s$ in the selection stage and $C$ in the training SGM correspond to the clipping bounds in their respective phases.

\begin{theorem}[]
\label{eq:rdp_gamma}
Let $p_0 = \mathcal{N}(0, C^2 \sigma^2)$ and $p_1 = \mathcal{N}(1, C^2 \sigma^2)$ denote the probability density functions of two Gaussian distributions. A single DP-SGD training step satisfies $(\alpha, \gamma_i)$-RDP for any $\gamma_i$ such that:
\begin{equation}
\gamma_i \geq D_\alpha\left( (1 - q) p_0 + q p_1 \, \| \, p_0 \right).
\end{equation}
\end{theorem}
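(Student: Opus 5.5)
We would prove \Cref{eq:rdp_gamma} by reducing one DP-SGD step to the sub-sampled Gaussian mechanism of \Cref{def:sgm} and then following the standard SGM analysis of Mironov et al.~\cite{sgm}.

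\emph{Reduction to the sum query.} The released quantity in \Cref{eq:dpsgd} is $\lambda\bigl(\frac{1}{B^*}\sum_i \bar g_i + \frac{C}{B^*}\mathcal{N}(0,\sigma^2\mathbb{I})\bigr)$. Here $B^*$ and $\lambda$ are fixed, data-independent constants. The map from $\sum_i \bar g_i + \mathcal{N}(0,C^2\sigma^2\mathbb{I})$ to the update is therefore an invertible affine bijection, and the R\'enyi divergence is invariant under it. So it suffices to analyze $f(S)=\sum_{x\in S}\mathrm{Clip}(\nabla\mathcal{L}(\theta,x),C)$ plus $\mathcal{N}(0,C^2\sigma^2\mathbb{I})$, where $S$ is Poisson-sampled at rate $q$ and $f(\varnothing)=0$.

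\emph{Conditioning and convexity.} Take add/remove neighbors $D'=D\cup\{x^\star\}$. Condition on the sampling outcome $S\subseteq D$ of the common records. Then $\mathcal{Q}(D)\mid S=\mathcal{N}(f(S),C^2\sigma^2\mathbb{I})$. Because $x^\star$ is included independently with probability $q$, we also have $\mathcal{Q}(D')\mid S=(1-q)\mathcal{N}(f(S),\cdot)+q\,\mathcal{N}(f(S)+v,\cdot)$, where $v$ is the clipped gradient of $x^\star$ and $\|v\|_2\le C$. Both unconditioned laws are mixtures over $S$ with identical weights. The function $\exp((\alpha-1)D_\alpha(\cdot\|\cdot))$ is jointly convex, so the divergence between the mixtures is at most the supremum over $S$ of the conditional divergences. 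Translating by $f(S)$ then removes $S$ entirely.

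\emph{Reduction to one dimension.} Rotate coordinates so that $v=\|v\|e_1$. The two distributions then factor as a common Gaussian on the orthogonal coordinates times a one-dimensional pair. The common factor contributes zero divergence, leaving $D_\alpha\bigl((1-q)\mathcal{N}(0,C^2\sigma^2)+q\mathcal{N}(\|v\|,C^2\sigma^2)\,\|\,\mathcal{N}(0,C^2\sigma^2)\bigr)$. Next, show this is nondecreasing in $\|v\|$, so the worst case is the boundary $\|v\|=C$. Rescaling by $1/C$ turns the pair into unit shift with noise $\sigma$. This is the $p_0,p_1$ pair in the statement, read with normalized shift, i.e.\ the mean difference equals the clipping norm. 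Hence the divergence is bounded by $D_\alpha((1-q)p_0+qp_1\|p_0)\le\gamma_i$.

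\emph{Main obstacle.} Two points are nontrivial:
\begin{enumerate}
\item the monotonicity in the shift magnitude;
\item the reverse direction $D_\alpha(p_0\|(1-q)p_0+qp_1)$, which arises from the removal neighbor and must also be dominated by the forward quantity.
\end{enumerate}
For monotonicity, we would expand $\mathbb{E}_{p_0}[(1-q+q\,p_1/p_0)^\alpha]$ binomially for integer $\alpha$. Each term is $\binom{\alpha}{k}(1-q)^{\alpha-k}q^k\exp\bigl(k(k-1)\|v\|^2/(2C^2\sigma^2)\bigr)$, which is increasing in $\|v\|$, and we would extend to real $\alpha$ by the argument in~\cite{sgm}. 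For the reverse direction, we would invoke the corresponding theorem of Mironov et al.~\cite{sgm}, which shows that the forward divergence dominates, rather than reprove it.
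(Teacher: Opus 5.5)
The paper gives no argument of its own for this statement: it quotes the sampled-Gaussian bound of~\cite{sgm} and remarks that the unit shift is the normalized one. Your outline reconstructs that standard argument, so the route is the same. The following steps are sound:
\begin{itemize}
\item the reduction to the clipped-sum query;
\item conditioning on the common subsample together with joint convexity of $e^{(\alpha-1)D_\alpha}$;
\item the rotation to one dimension;
\item citing~\cite{sgm} for the forward divergence dominating the reverse one.
\end{itemize}

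The step that fails as described is monotonicity in $\|v\|$ for non-integer $\alpha$. Your binomial expansion settles integer $\alpha$. For fractional $\alpha$, however, the generalized coefficients $\binom{\alpha}{k}$ alternate in sign for $k>\alpha+1$. The series in $q\,p_1/p_0$ also diverges wherever $q\,p_1(z)/p_0(z)>1-q$. So there is no termwise comparison to ``extend''. Fractional orders are not optional here, because the paper's accountant optimizes over a grid starting at $\alpha=1.01$.

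A short fix that covers every real $\alpha>1$ and both directions at once is data processing. Set $\beta=\|v\|/C\in[0,1]$ and use the kernel $z\mapsto\beta z+\xi$ with $\xi\sim\mathcal{N}(0,(1-\beta^2)C^2\sigma^2)$. It fixes $\mathcal{N}(0,C^2\sigma^2)$ and sends $\mathcal{N}(C,C^2\sigma^2)$ to $\mathcal{N}(\|v\|,C^2\sigma^2)$. Hence it sends the shift-$C$ mixture to the shift-$\|v\|$ mixture, and $D_\alpha$ cannot increase under a common kernel. With this, you need the forward-dominates-reverse result of~\cite{sgm} only at the single normalized pair.

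Separately, state explicitly what you prove. Your bound has $p_1=\mathcal{N}(C,C^2\sigma^2)$, equivalently the pair $\mathcal{N}(0,\sigma^2)$, $\mathcal{N}(1,\sigma^2)$. It is not the bound for the literal pair in the statement. The two give the same divergence only when $C=1$, which is the value the paper happens to use.

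For $C>1$ the literal right-hand side is strictly smaller than the tight value. At $\alpha=2$ it is $\log\bigl(1+q^2(e^{1/(C^2\sigma^2)}-1)\bigr)$ instead of $\log\bigl(1+q^2(e^{1/\sigma^2}-1)\bigr)$. The literal claim is then false for neighbors differing in one record whose clipped gradient has norm $C$, with all other records contributing zero.

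So your phrase ``read with normalized shift'' is a correction of the statement, consistent with the paper's remark about the normalized analysis. Present it as such, rather than saying ``this is the pair in the statement''.
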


\noindent Here the unit mean shift ($0 \rightarrow 1$) follows the normalized analysis in \cite{sgm}. \Cref{eq:rdp_gamma} facilitates the characterization of the incremental privacy cost $\gamma_i$ by evaluating the R\'{e}nyi divergence across a sub-sampled mixture distribution. The mixture distribution arises because in SGM the differing example is included in the sub-batch with probability $q$. leading to a mixture of the two output distributions. To report these guarantees within the standard $(\varepsilon, \delta)$-DP paradigm, we utilize the following analytical conversion.

\begin{theorem}[\textit{From $(\alpha, \gamma)$-RDP to $(\varepsilon, \delta)$-DP}~\cite{rdp}]
\label{eq:rdp_to_dp}
A mechanism $\mathcal{A}$ satisfying $(\alpha, \gamma)$-RDP also satisfies $(\varepsilon, \delta)$-DP for any $0 < \delta < 1$, where, $\varepsilon = \gamma + \frac{\ln (1/\delta)}{\alpha - 1}$.
\end{theorem}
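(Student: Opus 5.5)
We prove the conversion directly from the definition of Rényi divergence in \Cref{def:rdp}. Fix neighboring datasets $D, D'$ and a measurable output set $\mathcal{O}$. Write $P=\mathcal{Q}(D)$ and $R=\mathcal{Q}(D')$, with densities $p, r$ with respect to a common dominating measure. The goal is the single inequality $P(\mathcal{O}) \le e^{\varepsilon} R(\mathcal{O}) + \delta$ with $\varepsilon = \gamma + \frac{\ln(1/\delta)}{\alpha-1}$. Since $D_\alpha(P\|R)\le\gamma<\infty$ and $\alpha>1$, $P$ must be absolutely continuous with respect to $R$: otherwise the set where $r=0<p$ forces the divergence to be infinite. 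So we may write
\[
P(\mathcal{O}) = \mathbb{E}_{x\sim R}\!\left[\tfrac{p(x)}{r(x)}\mathbf{1}_{\mathcal{O}}(x)\right].
\]
Checking this support issue is the only delicate point; the rest is an inequality chain.

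\textbf{Step 1 (Hölder).} We apply Hölder's inequality with exponents $\alpha$ and $\alpha/(\alpha-1)$ to the expectation above:
\[
P(\mathcal{O}) \le \left(\mathbb{E}_{R}\!\left[(p/r)^\alpha\right]\right)^{1/\alpha} R(\mathcal{O})^{(\alpha-1)/\alpha}.
\]
By \Cref{def:rdp}, $\mathbb{E}_R[(p/r)^\alpha] = e^{(\alpha-1)D_\alpha(P\|R)} \le e^{(\alpha-1)\gamma}$. This yields the key intermediate bound
\[
P(\mathcal{O}) \le \left(e^{\gamma} R(\mathcal{O})\right)^{(\alpha-1)/\alpha}.
\]

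\textbf{Step 2 (case split).} We split according to the size of $u := e^{\gamma}R(\mathcal{O})$, using the threshold $\delta^{\alpha/(\alpha-1)}$.
\begin{itemize}
\item If $u \le \delta^{\alpha/(\alpha-1)}$, then Step 1 gives $P(\mathcal{O}) \le u^{(\alpha-1)/\alpha} \le \delta$.
\item Otherwise $u^{-1/\alpha} < \delta^{-1/(\alpha-1)}$. Then
\[
P(\mathcal{O}) \le u\cdot u^{-1/\alpha} < e^{\gamma}\delta^{-1/(\alpha-1)} R(\mathcal{O}) = e^{\gamma + \ln(1/\delta)/(\alpha-1)} R(\mathcal{O}) = e^{\varepsilon}R(\mathcal{O}).
\]
\end{itemize}

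In both cases $P(\mathcal{O}) \le e^{\varepsilon}R(\mathcal{O}) + \delta$. Since $D, D'$ and $\mathcal{O}$ were arbitrary, this is exactly \Cref{eq:dp}, so the mechanism is $(\varepsilon,\delta)$-DP for every $0<\delta<1$.

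The main obstacle is the measure-theoretic bookkeeping behind the likelihood-ratio representation: showing that finite $D_\alpha$ forces absolute continuity, and that the case $R(\mathcal{O})=0$ is handled (it falls into the first case, giving $P(\mathcal{O})=0\le\delta$). Once the Hölder bound of Step 1 holds, the case split is elementary.
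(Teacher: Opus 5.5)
Your proof is correct and follows the standard argument of the cited source \cite{rdp} (the paper gives no proof of its own and only invokes that result): H\"older's inequality yields the probability-preservation bound $P(\mathcal{O})\le (e^{\gamma}R(\mathcal{O}))^{(\alpha-1)/\alpha}$, and then you split cases at $e^{\gamma}R(\mathcal{O})=\delta^{\alpha/(\alpha-1)}$. One point is worth stating explicitly: your absolute-continuity step reads the paper's $\mathbb{E}_{x\sim N}[(M(x)/N(x))^{\alpha}]$ as $\int p^{\alpha}r^{1-\alpha}\,d\mu$ with the convention $x/0=\infty$ for $x>0$, which is the standard convention and is genuinely needed, because an expectation taken literally under $N$ ignores the set $\{r=0<p\}$, and the conversion would then be false.
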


By adjusting the noise variance $\sigma^2$, the privacy cost $\varepsilon = \gamma + \frac{\ln (1/\delta)}{\alpha - 1}$ can meet the target privacy budget $\varepsilon$. In practice, $\epsilon$ is obtained by optimizing over $\alpha$, typically evaluated over a predefined grid (e.g., $\alpha \in [1.01, 2, 4, ..., 256]$).

\vspace{1mm}
\noindent \textbf{Privacy Composition.} 
The sequential execution of multiple DP operations requires a unified approach to determine the global privacy guarantee. An advantage of the RDP framework is its linear compositionality. Formally, if a sequence of $k$ mechanisms $\{\mathcal{Q}_1, \mathcal{Q}_2, \ldots, \mathcal{Q}_k\}$ is applied, where each $\mathcal{Q}_i$ independently satisfies $(\alpha, \gamma_i)$-RDP, the resulting composite mechanism is guaranteed to satisfy $(\alpha, \gamma)$-RDP with $\gamma = \sum_{i=1}^k \gamma_i.$

This additive property streamlines privacy accounting across the iterative updates of DP-SGD and the multi-stage pipeline of \toolname{}. The total privacy cost accumulates as, $\gamma_{\text{total}} = \gamma_s + \sum_{i=1}^{t_d} \gamma_{d,i}$.
Once the cumulative RDP cost $(\alpha, \gamma_{\text{total}})$ is determined, it is mapped back to the standard $(\varepsilon, \delta)$-DP through \Cref{eq:rdp_to_dp} by optimizing over the order $\alpha$.

\section{Missing Proof}
\label{apsec:proof}

\textbf{Proof of~\Cref{the:select}.} \textit{The averaged clipped gradient query $[S_1(D_s),...,S_K(D_s)]$ has global $\ell_2$ sensitivity $\Delta_s = C_s / N^\ast.$ For any Rényi order $\alpha > 1$, adding Gaussian noise $\mathcal{N}\!\left(0, \sigma_s^2 \Delta_s^2 \mathbb{I}\right)$ to $S_k(D_s)$ for the weight matrices $\{\mathbf{W}_1, \dots, \mathbf{W}_K\}$ ensures that the resulting mechanism satisfies $(\alpha, \gamma_s)$-RDP for $\gamma_s = \frac{\alpha}{2\sigma_s^2}$.}

\vspace{1mm}
\noindent \textit{Proof.} Consider two neighboring sensitive datasets under the add/remove adjacency:
$D_s=\{x_1,\dots,x_{N}\}$ and $D_s'=\{x_1,\dots,x_{N-1}\}$.
For each sample $x_i$, let $
g_i=[\mathrm{vec}(\nabla_{\textbf{W}_1}\mathcal{L}_i),\dots,
      \mathrm{vec}(\nabla_{\textbf{W}_K}\mathcal{L}_i)]$ denote the concatenated gradient across all $K$ candidate matrices,
and let $\bar g_i = \Gamma_{C_s}(g_i)$ be the jointly clipped gradient. By the definition of the joint clipping operator, $\|\bar g_i\|_2 \le C_s.$

\begin{table*}[!t]
    \centering
    \caption{Hyperparameter settings of \toolname. For different public models, we use the same hyperparameter settings.}
    \label{tab:hyperparameter}
    \setlength{\tabcolsep}{14.5pt}
    \renewcommand{\arraystretch}{1.1}
    \resizebox{1.0\textwidth}{!}{
    \begin{tabular}{l|cccc|cccc}
    \toprule
    \multirow{2}{*}{\textbf{hyperparameter}}& \multicolumn{4}{c|}{$\epsilon=1.0$} & \multicolumn{4}{c}{$\epsilon=10.0$}\\
    \Xcline{2-9}{0.5pt}
    & {\tt CIFAR-10} & {\tt OCTMNIST} & {\tt CelebA} & {\tt Camelyon} & {\tt CIFAR-10} & {\tt OCTMNIST} & {\tt CelebA} & {\tt Camelyon}\\
    \midrule
    Noise scale $\sigma_d$ & 21.2 & 15.6 & 4.2 & 2.4 & 1.7 & 1.5 & 0.8 & 0.7 \\
    Noise scale $\sigma_s$ & 5.0 & 5.0 & 5.0 & 5.0 & 5.0 & 5.0 & 5.0 &  5.0\\
    Selection grad. norm $C_s$ & 1.0 & 1.0 & 1.0 & 1.0 & 1.0 & 1.0 & 1.0 & 1.0 \\
    Fine-tuning epoch & 91 & 74 & 28 & 15 & 91 & 74 & 28 & 15 \\
    Fine-tuning iterations $t_d$ & 1000 & 1000 & 1000 & 1000 & 1000 & 1000 & 1000 & 1000 \\
    Select ratio $c$ & 30\% & 30\% & 30\% & 30\% & 30\% & 30\% & 30\% & 30\% \\
    Fine-tuning learning rate $\lambda$ & $5e^{-4}$ & $5e^{-4}$ & $5e^{-4}$ & $5e^{-4}$ & $5e^{-4}$ & $5e^{-4}$ & $5e^{-4}$ & $5e^{-4}$\\
    Fine-tuning Batch size & 4096 & 4096 & 4096 & 4096 & 4096 & 4096 & 4096 & 4096 \\
    Fine-tuning grad. norm $C$ & 1.0 & 1.0 & 1.0 & 1.0 & 1.0 & 1.0 & 1.0 & 1.0 \\
    Fine-tuning sample rate $q$ &$9.1e^{-2}$ & $7.4e^{-2}$ & $2.8e^{-2}$ & $1.5e^{-2}$ & $9.1e^{-2}$ & $7.4e^{-2}$ & $2.8e^{-2}$ & $1.5e^{-2}$ \\
    LoRA rank $r$ & 4 & 4  & 4 & 4 & 4 & 4 & 4 &  4 \\
    \bottomrule
\end{tabular}
}
\end{table*}

\begin{table*}[!t]
    \centering
    \caption{Prompts used for class-conditional generation across four datasets.}
    \label{tab:prompts}
    \setlength{\tabcolsep}{23.0pt}
    \resizebox{1.0\textwidth}{!}{
    \renewcommand{\arraystretch}{1.2}
    \begin{tabular}{l|clp{9.5cm}}
    \toprule
    \textbf{Dataset} & \textbf{The Number of Category} & \textbf{Prompt} \\
    \midrule
    {\tt CIFAR-10} & 10 & 
    \makecell[l]{
        ``An image of an airplane'', ``An image of an automobile'', ``An image of a bird'',\\
        ``An image of a cat'', ``An image of a deer'', ``An image of a dog'',\\
        ``An image of a frog'', ``An image of a horse'', ``An image of a ship'',\\
        ``An image of a truck.''
    } \\
    \midrule
    {\tt CelebA} & 2  & 
    ``An image of a female face'', ``An image of a male face'' \\
    \midrule
    {\tt Camelyon} & 2  & 
    ``A normal lymph node image'', ``A lymph node histopathology image'' \\
    \midrule
    {\tt OCTMNIST} & 4  & 
    \makecell[l]{
        ``An optical coherence tomography (OCT) image for retinal disease 1'',\\
        ``An optical coherence tomography (OCT) image for retinal disease 2'',\\
        ``An optical coherence tomography (OCT) image for retinal disease 3'',\\
        ``An optical coherence tomography (OCT) image for retinal disease 4''
    } \\
    \bottomrule
    \end{tabular}}
\end{table*}

Let $F(D_s)$ denote the vector-valued query that stacks the averaged clipped gradients for all matrices:
$$
F(D_s)
=
\left[S_1(D_s),\dots,S_K(D_s)\right]
=
\frac{1}{N^\ast}\sum_{i=1}^{N} \bar g_i.
$$
Similarly,
$
F(D_s')=\frac{1}{N^\ast}\sum_{i=1}^{N-1} \bar g_i.$
Since the two datasets differ in only one sample $x_N$, their results satisfy,
$
F(D_s)-F(D_s')
=
\frac{1}{N^\ast}\bar g_N.
$
Taking the $\ell_2$ norm and using $\|\bar g_N\|_2 \le C_s$ gives,
$$
\|F(D_s)-F(D_s')\|_2
=
\frac{1}{N^\ast}\|\bar g_N\|_2
\le
\frac{C_s}{N^\ast}.
$$
Thus the global $\ell_2$ sensitivity is $\Delta_s=\frac{C_s}{N^\ast}.$ We refer to implementation in DPImageBench~\cite{gong2025dpimagebench}, we adopt the approximation $N^\ast \approx N$ and ignore the additional privacy cost associated with estimating the dataset size.

By the Rényi DP guarantee of the Gaussian mechanism, for any Rényi order $\alpha>1$,
$\tilde{F}(D_s)$ satisfies $(\alpha,\gamma_s)$-RDP with~\cite{rdp},
\begin{equation*}
    \gamma_s = \frac{\alpha \Delta_s^2}{2 (\sigma_s C_s / N^\ast)^2} = \frac{\alpha (C_s/N^\ast)^2}{2 (\sigma_s C_s / N^\ast)^2} = \frac{\alpha}{2 \sigma_s^2}.
\end{equation*}
All averaged gradients $\{S_k(D_s)\}_{k=1}^K$ are jointly protected because their concatenation $F(D_s)= [S_1(D_s),\dots,S_K(D_s)]$ has global sensitivity $\Delta_s$, and we add independent Gaussian noise of variance $\sigma_s^2 \Delta_s^2 \mathbb{I}$ to each block. This is equivalent to applying a Gaussian mechanism to the full vector $F(D_s)$, and therefore the released set $\{S_k(D_s)\}_{k=1}^K$ satisfies the same $(\alpha,\gamma_s)$-RDP.

\vspace{1mm}
\noindent \textbf{Proof of~\Cref{the:dplora}.} \textit{Given a fixed clipping threshold $C$, noise multiplier $\sigma_d$, sampling rate $q$, and number of iterations $t_d$, the $(\epsilon, \delta)$-DP guarantee of DP-SGD remains identical regardless of whether the training is performed on the full parameter set $\Theta$ or a low-rank subspace $\Theta_{\text{LoRA}}$.}

\vspace{1mm}
\noindent \textit{Proof.} For the objective $\mathcal{L}(\theta; x)$, let $g(\theta; x) = \nabla_{\theta} \mathcal{L}(\theta; x)$ denote the gradient. The clipping operator $\Gamma_C$ is defined as,
$$\Gamma_C(g) = g \left/ \max\left(1, \frac{\|g\|_2}{C}\right)\right.,$$
Let $\mathcal{M}_{\text{full}}$ be the DP-SGD over the full parameter space $\Theta \in \mathbb{R}^{m\times h}$ and $\mathcal{M}_{\text{lora}}$ be the mechanism over the LoRA subspace $\Theta_{\text{LoRA}} \in \mathbb{R}^{r\times (m+h)}$. The $m$ and $h$ are the sizes of the weight matrix, and $r \ll \min(m, h)$ is the rank of the decomposition. For identical hyperparameters $\{C, \sigma_d, q, t_d\}$, it holds that,
$$\text{PrivacyCost}(\mathcal{M}_{\text{full}}) = \text{PrivacyCost}(\mathcal{M}_{\text{lora}}).$$
Considering a single iteration, the aggregated clipped gradient for the sample $x_i$ in a sub-batch $D_s^{\text{sub}} = \{x_i\}_{i=1}^{B}$ is, $\sum_{i \in B} \Gamma_C(g_i).$
The $g_i$ means the gradient for $x_i$. The $\ell_2$-sensitivity of $\tilde{G}_t$ for neighbor datasets $D, D'$ (as introduced in~\Cref{sub:dp}) is,
$$\Delta_2 = \max_{D_s, D_s'} \left\| \sum_{i \in D_s} \Gamma_C(g_i) - \sum_{j \in D_s'} \Gamma_C(g_j) \right\|_2.$$
By the Triangle Inequality and the definition of $\Gamma_C$,
$$\Delta_2 = \max_{x \in D_s \Delta D_s'} \|\Gamma_C(g(\theta; x))\|_2 \leq C.$$
Crucially, the bound $C$ is a scalar invariant to the dimensionality of the parameter vector space $d$, i.e., $\forall d \in \{m\times h, r \times (m+h)\}: \|\Gamma_C(g)\|_2 \leq C$. Consequently, the RDP budget $\gamma_{d,i} = \frac{\alpha \Delta_2^2}{2\sigma_d^2 C^2} = \frac{\alpha}{2\sigma_d^2}$ remains independent of the number of trainable parameters. For simplicity, we omit privacy amplification here.
Thus, the dimensionality reduction inherent in LoRA does not compromise the DP. LoRA and full-parameter DP-SGD satisfy the same $(\epsilon, \delta)$-DP under equivalent noise. 

\begin{figure*}[!t]
    \centering
    \setlength{\abovecaptionskip}{0pt}
    \includegraphics[width=0.99\linewidth]{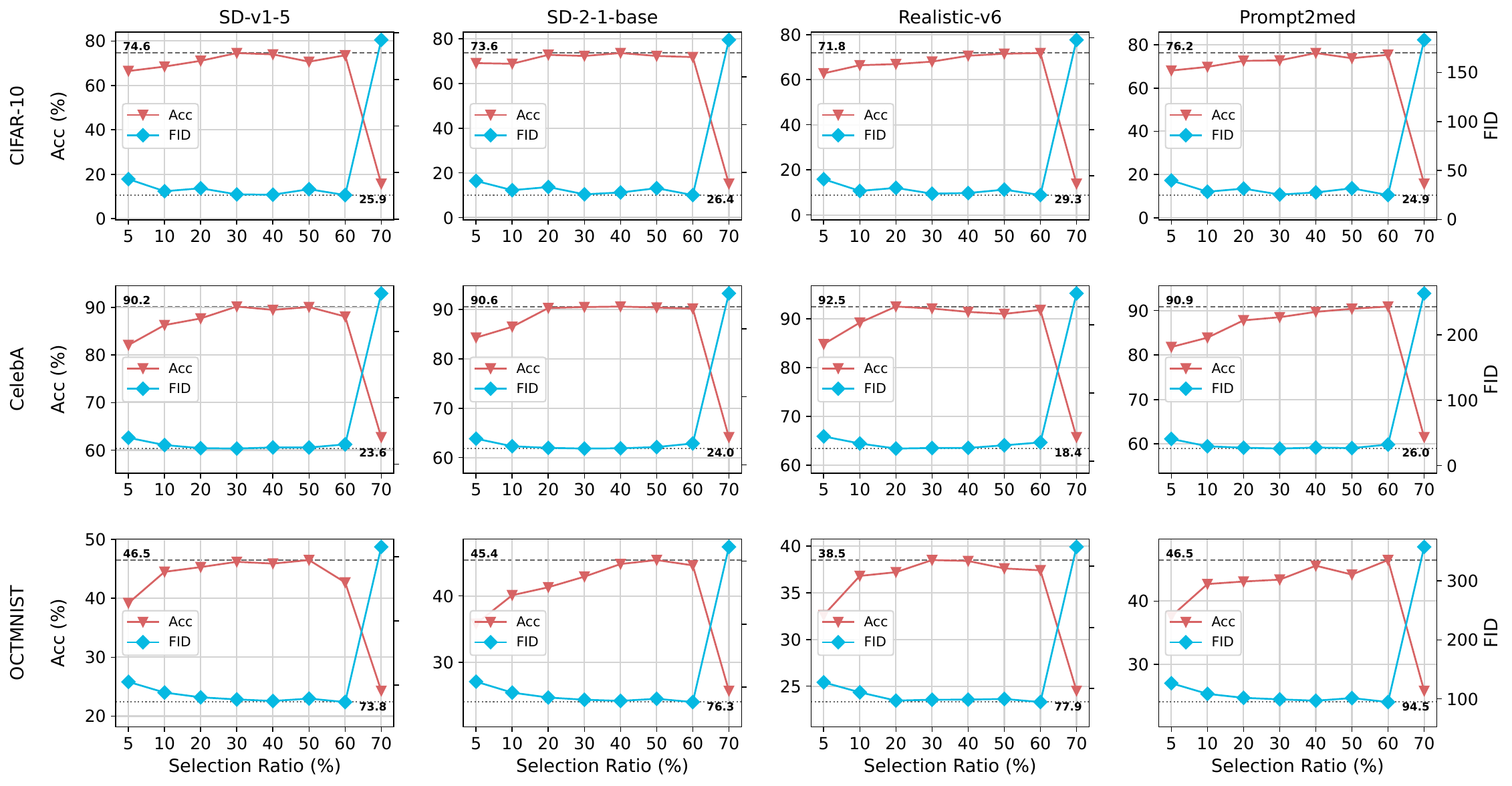}
    \caption{\gc{The Acc (\%) and FID of synthetic images generated by \toolname{} for the sensitive image datasets. }}
    \label{fig:ratio_combine}
\end{figure*}

\section{Implementation Details}
\label{apsec:id}

\subsection{Hyperparameter Settings of \toolname}
\label{apsubsec:hyper}

This section introduces the hyperparameter settings in \toolname. As presented in Table~\ref{tab:hyperparameter},  most parameters remain fixed across all datasets and privacy budgets. Only three parameters vary by dataset. The sample rate $q$ is dataset-dependent and derived from $q = \text{batch size} / \text{dataset size}$. Following previous works~\cite{dpldm,dplora}, we ignore the privacy cost of querying the dataset size. The noise scale of DPSGD $\sigma_d$ is computed via the DP composition theorem to satisfy the given privacy budget. Fine-tuning epoch naturally differs across datasets due to their varying training set sizes, while the number of fine-tuning steps  remains strictly unified. It is noticed that for different public models, we use the same hyperparameter settings.

We did not tune the shared hyperparameters between DP-LDM, DP-LoRA, DP-Finetune, and \toolname{} to avoid consuming additional privacy budget~\cite{koskela2023practical} and instead used the default settings presented in the previous work~\cite{gong2025dpimagebench}.

\subsection{Text Prompt}
\label{apsubsec:text}

Since we fine‑tune public text-to-image generative models (e.g., Stable Diffusion~\cite{labelembedding}), text prompts are conditional inputs during DP finetuning. For each dataset, we convert class labels into natural-language descriptions to form \textit{(prompt, image)} pairs for fine-tuning, as shown in~\Cref{tab:prompts}. The prompts for {\tt CIFAR-10}, {\tt CelebA}, and {\tt Camelyon} follow the mappings in DPImageBench~\cite{gong2025dpimagebench}. For the {\tt OCTMNIST} dataset, we adopt neutral numeric labels (type 1–4) rather than clinical abbreviations because we find that pretrained text encoders lack reliable semantic 
grounding for these rare ophthalmic terms.

\begin{table}[!t]
    \centering
    \caption{\gc{Performance of synthetic images using alternative saliency metrics under privacy budget $\epsilon=10.0$. We use `Stable-Diffusion-v1-5' as the public model.}}
    \label{tab:alternative-saliency}
    \setlength{\tabcolsep}{1.8mm}{
    \resizebox{0.48\textwidth}{!}{
    \renewcommand{\arraystretch}{1.2}
    \begin{tabular}{l|ccc|ccc|ccc|ccc}
    \toprule
    \multirow{2}{*}{\textbf{Datasets}}  & \multicolumn{3}{c|}{\textbf{SNIP}} & \multicolumn{3}{c|}{\textbf{FORCE}}  & \multicolumn{3}{c|}{\textbf{SynFlow}}   & \multicolumn{3}{c}{\textbf{DP-SAPF}} \\
    \Xcline{2-13}{0.5pt}
    & \centering \textbf{FID} & \textbf{Acc} & \textbf{Time} & \textbf{FID} & \textbf{Acc} & \textbf{Time} & \textbf{FID}  & \textbf{Acc} & \textbf{Time} & \textbf{FID} & \textbf{Acc} & \textbf{Time} \\
    \midrule
    {\tt CIFAR-10} &  27.1 & 74.2 & 1.2h  & 29.2 &  73.8 & 6.0h & 26.3 & 73.8 & 1.8h & 26.6 & 74.6 & 0.6h \\
    {\tt CelebA} & 24.0  & 89.8 & 1.5h  & 27.1 & 89.0  & 7.5h  &  27.5 & 89.5 & 2.2h  & 23.6 & 90.2 & 0.8h \\
    \bottomrule
\end{tabular}
}}
\end{table}

\section{\gc{Additional Experimental Analysis}}
\label{apsec:additional_analysis}

\subsection{\gc{Alternative Saliency Metric}}
\label{apsubsec:alternative_saliency}

\gc{We conduct experiments using three alternative saliency metrics, SNIP~\cite{DBLP:conf/iclr/LeeAT19}, FORCE~\cite{force}, and SynFlow~\cite{synflow}. As shown in \Cref{tab:alternative-saliency}, these metrics achieve similar FID and accuracy, suggesting that the benefit mainly comes from selecting a high-saliency parameter subspace. However, \toolname{} is consistently faster. It takes only $0.6$h/$0.8$h on {\tt CIFAR-10}/{\tt CelebA}, compared with $1.2$h/$1.5$h for SNIP, $1.8$h/$2.2$h for SynFlow, and $6.0$h/$7.5$h for FORCE. This speedup comes from the scoring procedure: \toolname{} directly ranks matrices using the clipped and noised gradients already computed for the private selection query, whereas SNIP, FORCE, and SynFlow require additional metric-specific saliency computation before DP fine-tuning. Therefore, alternative saliency metrics are feasible, but our gradient-norm criterion is simpler and more time-efficient while preserving comparable or better synthetic performance. We emphasize that the main contribution of this paper is identifying a training collapse when using public models under DP and proposing an effective, straightforward mitigation method.}

\subsection{\gc{Analysis for Parameter Selection Ratio}}
\label{apsubsec:noise_sacle}

\gc{This section evaluates the sensitivity of parameter selection ratio $c$ for the datasets {\tt CIFAR-10}, {\tt CelebA}, and {\tt OCTMNIST}, using public models `Stable-Diffusion-v1-5', `Stable-Diffusion-2-1-base', `Realistic-v6' and `Prompt2med'. 

\Cref{fig:ratio_combine} shows that when $c$ lies within the range of $[20\%, 50\%]$, both FID and accuracy remain largely stable across most cases, indicating that our method is relatively insensitive to this hyperparameter.
Although for {\tt OCTMNIST}, when using `Stable Diffusion 2.1-base' as the public model, the accuracy increases from 41.9\% to 45.4\% (a modest gain of 3.5\%), both FID and accuracy remain stable in the other cases.
Consequently, tuning the selection ratio in practice is simple, and $30\%$ or $40\%$ both serves as a suitable default choice when applying \toolname{} to new sensitive datasets and public models. This paper does not tune this hyperparameter and instead directly adopt $30\%$ as the default setting. }

\end{document}